\def\eqref#1{equation~\ref{#1}}
\def\1{\bm{1}}
\DeclareMathAlphabet{\mathsfit}{\encodingdefault}{\sfdefault}{m}{sl}
\SetMathAlphabet{\mathsfit}{bold}{\encodingdefault}{\sfdefault}{bx}{n}
\newcommand{\tuple}[1]{\ensuremath{\langle#1\rangle}}
\newcommand{\set}[1]{\ensuremath{\{#1\}}}
\newcommand{\powerset}[1]{\ensuremath{\mathbb{P}(#1)}\xspace}
\newcommand{\revise}[1]{#1}
\newcommand{\remove}[1]{}
\title{Lemur:  Integrating Large Language Models in Automated Program Verification}
\author{%
  Haoze Wu\thanks{This work was mostly done during an internship at VMware Research.}, \  Clark Barrett\\
  Department of Computer Science\\
  Stanford University\\
  \texttt{\{haozewu, barrett\}@stanford.edu} \\
  \And
  Nina Narodytska \\
  VMware Research \\
  VMware by Broadcom\\
  \texttt{n.narodytska@gmail.com} \\
}
\newif\ifextended
\newcommand{\extref}[1]{#1}
\newcommand{\extref}[1]{the extended version of the paper~\citep{wu2023lemur}}
\newcommand{\ext}[1]{#1}
\newcommand{\ext}[1]{}
\begin{document}

\maketitle

\begin{abstract}
The demonstrated code-understanding capability of LLMs raises the question 
of whether they can be used for automated program verification, a task that 
demands high-level abstract reasoning about program properties that is challenging for verification tools. We propose a general methodology to combine the power of LLMs and automated reasoners for automated program verification. We formally describe this methodology as a set of transition rules and prove its soundness. We instantiate the calculus as a sound automated verification procedure and demonstrate practical improvements on a set of synthetic and competition benchmarks.

%


\end{abstract}

\section{Introduction}

AI-powered language models are being routinely used to help developers. 
Examples include program synthesis from natural language descriptions by GPT-4
~\citep{openai2023gpt4} or Github Copilot~\citep{copilot,githubcopilot}, 
and repairing code~\citep{deeprepair}, among others. These models have shown impressive results in generating correct code in many programming languages. %

An important research question is whether modern AI models are capable of understanding the logic behind the programs they analyze. Recently, several approaches have been proposed to combine the strengths of formal verification and Large Language Models (LLMs) that demonstrate such capabilities. For example, \cite{pmlr-v202-pei23a} made an important step in this direction by investigating whether LLMs can generate program properties, namely, program invariants, which remains a crucial and challenging task for automated program verification~\citep{clarke2018handbook}. The authors demonstrated that LLMs are effective in generating program invariants on a set of synthetic Java programs. Another example is the recent work by \cite{charalambous2023new}, who demonstrated that LLM models can be used to repair vulnerabilities in code, given examples of incorrect behavior. They provided compelling evidence of the complementary strengths of LLMs, which serve as a generator for code repair snippets, and formal techniques, which are used to check the correctness of the generated code. While these approaches show promise in program analysis tasks, they do not provide a formalization of the interaction between LLMs and formal verifiers, they require manual effort, or they are limited to the invariant generation process as a stand-alone procedure.

In this work, we propose a novel LLM-powered framework, \sys, for automated program verification tasks.
Our key idea is to combine LLMs' ability to perform abstract high-level reasoning and automated reasoners' ability to perform precise low-level reasoning. Specifically, LLMs are employed to propose program invariants in the form of sub-goals, which are then checked by automated reasoners. This transforms the program verification tasks into a series of deductive steps suggested by LLMs and subsequently validated by automated reasoners. Our main contributions are:
\vspace{-0.5pc}
\begin{itemize}[leftmargin=*]\itemsep0.08em 
\item a novel framework for  combining LLMs  and automated reasoners for  program verification; 
\item a presentation of \sys as a proof system and a proof of its soundness, which to the best of our knowledge, is the first formalization of 
such a hybrid approach;
\item an instantiation of the \sys calculus that gives a sound and terminating algorithm;
\item 
an implementation of the proposed framework (using OpenAI's GPT models); 
\item an experimental evaluation of \sys on two sets of benchmarks that demonstrates its efficiency compared with both existing AI-powered and conventional verification tools.
\end{itemize}
\vspace{-0.5pc}
\revise{ We highlight that \sys is the first \emph{fully automated} framework combining   LLMs and  reasoners.}

\section{Definitions}
\label{sec:prelim}

Given a program $\program : \Program$, a \emph{reachability property}, or simply \emph{property}, is a tuple $\property\is\tuple{\predicate, \linenumber}$, where $\predicate : \Predicate$ is a Boolean \emph{predicate} of the program state and $\linenumber : \Nat$ is a \emph{program line}. 
The \emph{negation} of \property, denoted $\neg\property$, is defined as \tuple{\neg\predicate, \linenumber}. 
Next we introduce several useful definitions and their properties.
\begin{definition}
A property $\property=\tuple{\predicate, \linenumber}$ is an \emph{invariant} on $\program$, denoted $\invariant(\program,p)$, iff $\property$ holds (i.e., \predicate always evaluates to true at line \linenumber) for all possible executions of the program \program.
\end{definition}
\begin{example}\label{exm:intro}
Consider the simple program $\program$ in Figure~\ref{fig:running-example} (the first frame, top row).   $\program$ instantiates an unsigned 32-bit integer variable \texttt{x} to 0 and increases its value by 4 on each loop iteration. A property $\property=\tuple{\predicate, \linenumber}$ is specified on the $4$th line, so $\predicate = (\texttt{x}~!\!\!=30)$  and $\linenumber=4$ for this property. It is easy to see that $\property$ is an invariant as   \texttt{x} will never be divisible by 3, for example. $\blacksquare$
\end{example}
\vspace{-0.2cm}

An assumption $\propertyAlt=\tuple{\predicate, \linenumber}$ is a condition that is assumed in a program.
\begin{definition}
An \emph{assumption} $\propertyAlt=\tuple{\predicate, \linenumber}$ modifies a program \program as follows:
\begin{enumerate}[nolistsep]
    \item if $\predicate$ holds at line  \linenumber during some execution of \program, then \program continues execution without changes;  
    \item if $\predicate$ does not hold at line  \linenumber during some execution of \program, then that execution terminates at $\linenumber$.
\end{enumerate}
\end{definition}
\vspace{-0.2cm}

We use $\programAlt= \asm(\program,q)$ to denote the modification of $\program$  with the assumption $\propertyAlt$.  An assumption can itself be an invariant. We now introduce a special notion of an \emph{implication}.


\begin{definition}
Let $\program$ be a program, and \property, \propertyAlt be properties on \program. We say that $\propertyAlt$ \emph{implies} $\property$ with respect to $\program$, denoted \imply{\propertyAlt}{\program}{\property}, iff $\property$ is an invariant on $\asm(\program, \propertyAlt)$.
\end{definition}
\vspace{-0.3cm}

\begin{example}
Consider the program $\program$ in Figure~\ref{fig:running-example} and an assumption $\propertyAlt=\tuple{\predicate=(\texttt{x\%4==1)}, \linenumber=3}$, with $\programAlt= \asm(\program,q)$
(depicted in the first frame in the bottom row of Figure~\ref{fig:running-example}).
To see the difference between $\program$ and $\programAlt$, observe that the loop is executed only once in  $\programAlt$.  This is because
\texttt{x=0} when entering the loop, so \texttt{(x\%4)!\!=1}, which is the formula $\predicate$ in $q$, does not hold, and therefore, \programAlt terminates. 
If we consider an alternative assumption $\propertyAlt'=\tuple{\predicate=(\texttt{x\%4==0)}, \linenumber=3}$
(depicted in the fourth frame at the bottom of Figure~\ref{fig:running-example}),
we can see that its predicate \predicate holds for all executions. Hence, $\propertyAlt'$ is an invariant for $\program$. 
Finally, we can see $\propertyAlt'\xrightarrow[\program]{} \property$, where \property is from Example~\ref{exm:intro}.  
$\blacksquare$
\end{example}
\vspace{-0.3cm}
The following propositions follow from the definitions above. 
\begin{proposition}
Let $\program$ be a program, and \property, \propertyAlt be properties on \program:
\begin{itemize}[nolistsep]
\item  The property \property is an invariant on $\program$ if 
\propertyAlt is an invariant on \program  and
\propertyAlt implies \property with respect to \program. More formally,  $(\invariant(\program,\propertyAlt) \wedge \imply{\propertyAlt}{\program}{\property}) \Rightarrow \invariant(\program,\property)$.
\item
The property \property is not an invariant on \program if 
 the property \property is not an invariant on $\programAlt\is\asm(\program,q)$. More formally, \revise{$\neg \invariant(\programAlt,\property)   \Rightarrow \neg \invariant(\program,\property)$}
\remove{$\neg \invariant(\program,\propertyAlt)   \Rightarrow \neg \invariant(\programAlt,\property)$}.
\end{itemize}
\label{prop:imply}
\end{proposition}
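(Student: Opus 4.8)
The plan is to prove both parts by reasoning directly about the set of executions of a program and how the $\asm$ operator transforms that set. The crucial observation, which I would establish first as a shared lemma, is a \emph{truncation-only} characterization of $\asm$: for any assumption $\propertyAlt=\tuple{\predicate,\linenumber}$, the executions of $\asm(\program,\propertyAlt)$ are exactly the executions of $\program$ truncated at the first visit (if any) to line $\linenumber$ at which $\predicate$ is false, and before that truncation point the truncated execution visits exactly the same (line, state) pairs as the original. In other words, $\asm$ never introduces a new state and never alters the state trajectory up to truncation; it only deletes a suffix. This follows directly from the two clauses of the definition of an assumption, read as an operational rule that re-checks $\predicate$ each time line $\linenumber$ is reached.

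For the first part, I would use the hypothesis $\invariant(\program,\propertyAlt)$ to rule out truncation entirely. Since $\propertyAlt$ is an invariant on $\program$, its predicate holds at line $\linenumber$ on every execution of $\program$ each time that line is reached, so the truncation condition never fires. By the shared lemma, $\program$ and $\asm(\program,\propertyAlt)$ then have exactly the same set of executions. The hypothesis $\imply{\propertyAlt}{\program}{\property}$ unfolds, by definition, to $\invariant(\asm(\program,\propertyAlt),\property)$, i.e.\ $\property$ holds at its line on every execution of $\asm(\program,\propertyAlt)$; since those are precisely the executions of $\program$, $\property$ holds on every execution of $\program$, which is $\invariant(\program,\property)$.

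For the second part I would prove the contrapositive, $\invariant(\program,\property)\Rightarrow\invariant(\programAlt,\property)$ with $\programAlt=\asm(\program,q)$. Writing $\property=\tuple{\predicate_p,\linenumber_p}$, assume $\predicate_p$ is true at line $\linenumber_p$ on every execution of $\program$. By the shared lemma, each execution of $\programAlt$ is a truncation of a corresponding execution of $\program$ with an identical state trajectory up to the truncation point; hence every visit to line $\linenumber_p$ that occurs in a $\programAlt$-execution also occurs, in the same state, in the corresponding $\program$-execution. Since $\predicate_p$ holds at all such visits in $\program$, it holds at the (sub)set of visits surviving in $\programAlt$, giving $\invariant(\programAlt,\property)$. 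The asymmetry is instructive: truncation can only discard potential violations of $\property$, which is exactly why the implication holds in this direction and not its converse.

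The main obstacle I anticipate is making the shared lemma precise rather than merely intuitive. This requires committing to an operational notion of an execution (a sequence of line/state pairs) and stating carefully how $\asm$ re-evaluates $\predicate$ at every arrival at line $\linenumber$, which matters inside loops where the assumption line may be visited many times and only the \emph{first} failing visit determines the truncation. Once that truncation-only characterization is pinned down, both implications reduce to the elementary set-equality and prefix-containment arguments sketched above.
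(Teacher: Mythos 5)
Your proof is correct. The paper itself gives no proof of this proposition --- it is stated with the remark that it ``follows from the definitions above,'' and the appendix only proves Lemma~\ref{lemma:stable} --- so your argument supplies exactly the reasoning the paper leaves implicit. Your key observation, the truncation-only characterization of $\asm$ (an assumption never alters the state trajectory, it only deletes a suffix of an execution starting at the first failing visit), is the right thing to isolate: it makes the first part an execution-set equality (no truncation ever fires when $\propertyAlt$ is an invariant, so $\program$ and $\asm(\program,\propertyAlt)$ have the same executions) and the second part a prefix-containment argument (every visit to the property's line in an execution of $\asm(\program,\propertyAlt)$ occurs with the same state in the corresponding execution of $\program$). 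You are also right that pinning this down requires committing to an operational notion of execution that the paper never formalizes; given any reasonable such formalization consistent with the paper's two-clause definition of an assumption, your proof goes through, including the edge case where the assumption's line coincides with the property's line.
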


\begin{proposition}
For any property $\property$ on a program \program, $\property\xrightarrow[\program]{} \property$.
\label{prop:self-imply}
\end{proposition}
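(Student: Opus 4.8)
The plan is to unfold the definition of implication and then read off the conclusion directly from the semantics of assumptions, so that essentially no computation is required. By the definition of implication instantiated with both properties taken to be $\property$, the claim $\imply{\property}{\program}{\property}$ holds exactly when $\property$ is an invariant on the modified program $\asm(\program, \property)$. Thus the whole task reduces to a single statement: writing $\property = \tuple{\predicate, \linenumber}$, I must show that $\predicate$ evaluates to true at line $\linenumber$ in every execution of $\asm(\program, \property)$.

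To prove that statement, I would take an arbitrary execution $e$ of $\asm(\program, \property)$ and inspect what happens the moment $e$ reaches line $\linenumber$. By the definition of the assumption $\asm(\program, \property)$ there are only two possibilities: either $\predicate$ holds at line $\linenumber$ and $e$ continues, or $\predicate$ fails at line $\linenumber$ and $e$ terminates there. In neither case does an execution of $\asm(\program, \property)$ ever run past line $\linenumber$ with $\predicate$ false. Consequently every execution that remains live at line $\linenumber$ satisfies $\predicate$ there, which is precisely the condition for $\property$ to be an invariant on $\asm(\program, \property)$, and hence for $\imply{\property}{\program}{\property}$ to hold.

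The one delicate point, and the step I would treat most carefully, is reconciling the phrase ``terminates at $\linenumber$'' in the assumption semantics with the phrase ``$\predicate$ always evaluates to true at line $\linenumber$'' in the definition of invariant. I need to justify that an execution cut off by the assumption because $\predicate$ failed must not be counted as a counterexample to the invariant on $\asm(\program, \property)$: the sole purpose of the assumption is to discard exactly those executions, so that the executions of $\asm(\program, \property)$ which survive to line $\linenumber$ are by construction the ones in which $\predicate$ holds. Once this reading of the definitions is pinned down, the proposition follows immediately by chaining the definition of $\imply{\property}{\program}{\property}$ with the definition of $\asm$ and then with the definition of $\invariant$.
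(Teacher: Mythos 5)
Your proof is correct: unfolding $\imply{\property}{\program}{\property}$ to $\invariant(\asm(\program,\property),\property)$ and observing that the assumption terminates exactly those executions in which $\predicate$ fails at line $\linenumber$ is precisely the argument the paper intends when it states that the proposition ``follows from the definitions'' (the paper gives no explicit proof). Your explicit handling of the one delicate point --- that an execution cut off by the assumption at line $\linenumber$ must be read as discarded rather than as a counterexample to ``$\predicate$ always evaluates to true at line $\linenumber$'' --- is a correct and worthwhile clarification of a convention the paper leaves implicit.
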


\begin{proposition}
For any properties $\property, \propertyAlt, \propertyAltAlt$ on a program \program, if $\property\xrightarrow[\program]{} \propertyAlt$ and $\propertyAlt\xrightarrow[\program]{} \propertyAltAlt$, then $\property\xrightarrow[\program]{} \propertyAltAlt$.
\label{prop:chain-imply}
\end{proposition}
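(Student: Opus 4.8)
The plan is to unfold the definition of the implication relation and reduce the claim to a statement about program executions. By definition, $\property\xrightarrow[\program]{}\propertyAlt$ means $\invariant(\asm(\program,\property),\propertyAlt)$, and $\propertyAlt\xrightarrow[\program]{}\propertyAltAlt$ means $\invariant(\asm(\program,\propertyAlt),\propertyAltAlt)$, while the goal $\property\xrightarrow[\program]{}\propertyAltAlt$ unfolds to $\invariant(\asm(\program,\property),\propertyAltAlt)$. So it suffices to show: assuming $\invariant(\asm(\program,\property),\propertyAlt)$ and $\invariant(\asm(\program,\propertyAlt),\propertyAltAlt)$, every execution of $\asm(\program,\property)$ satisfies the predicate of $\propertyAltAlt$ at its line.

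The key observation I would establish is that the assumption operator $\asm$ only \emph{truncates} executions: the executions of $\asm(\program,\propertyAlt)$ are exactly the executions of $\program$, each cut off at the first state where the predicate of $\propertyAlt$ fails at its designated line (and unchanged if that never happens). First I would fix an arbitrary execution $E$ of $\asm(\program,\property)$ and let $E_0$ be the underlying untruncated execution of $\program$ of which $E$ is the $\property$-truncation. From the first hypothesis, $\propertyAlt$ holds everywhere along $E$, i.e., the predicate of $\propertyAlt$ is never violated at any state of $E$. Consequently, the first point (if any) at which $\propertyAlt$ is violated in $E_0$ occurs strictly after the end of $E$. Letting $E'$ be the $\propertyAlt$-truncation of $E_0$, which is an execution of $\asm(\program,\propertyAlt)$, it follows that $E$ is a prefix of $E'$.

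Then I would apply the second hypothesis: since $\propertyAltAlt$ is an invariant on $\asm(\program,\propertyAlt)$, it holds along $E'$, i.e., the predicate of $\propertyAltAlt$ holds at every occurrence of its line in $E'$. Because $E$ is a prefix of $E'$, the same holds at every occurrence of that line in $E$. As $E$ was an arbitrary execution of $\asm(\program,\property)$, this shows $\invariant(\asm(\program,\property),\propertyAltAlt)$, which is exactly $\property\xrightarrow[\program]{}\propertyAltAlt$.

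The main obstacle is the prefix argument in the second paragraph: it requires pinning down the semantics of $\asm$ precisely enough to justify both that executions of an assumption-modified program are truncations of executions of the base program, and that ``$\propertyAlt$ never fails along $E$'' forces the $\propertyAlt$-truncation $E'$ of $E_0$ to extend $E$ rather than cut it shorter. Everything else is a direct unfolding of definitions. One edge case to handle cleanly is when $\property$ (resp.\ $\propertyAlt$) is never violated along $E_0$, so that no truncation occurs and $E = E_0$ (resp.\ $E' = E_0$); the prefix relation continues to hold in these degenerate cases, so no separate treatment is needed beyond noting it.
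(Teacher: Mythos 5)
Your proof is correct. The paper gives no explicit argument for this proposition---it is one of three statements asserted to ``follow from the definitions above,'' with no proof supplied in the body or the appendix---and your argument is precisely the intended definitional unfolding: $\asm$ only truncates executions, the first hypothesis forces the $\propertyAlt$-truncation point of the underlying execution to lie beyond the end of any execution of $\asm(\program,\property)$, and the resulting prefix relation transfers the invariance of $\propertyAltAlt$ back to $\asm(\program,\property)$. Your closing caveat is also the right one to flag: the only real work is fixing the trace semantics of $\asm$ precisely enough to justify the truncation/prefix step, which the paper leaves informal.
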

\vspace{-0.3cm}
Note that it is possible that neither a property \property nor its negation $\neg\property$ is an invariant on a program. 
\begin{example}
Consider again our example from Example~\ref{exm:intro} and two properties  at line $3$: 
$\property=\tuple{\predicate=(\texttt{x\%8==4}), \linenumber=3}$
and  $\property'=\tuple{\predicate'=(\texttt{x\%8!=4}), \linenumber=3}$. Neither $\property$ nor $\property'$ is an invariant on $\program$. 
On the first loop iteration, we have that $\texttt{x=0}$ before line $3$, so $\predicate'$ holds and $\predicate$ does not at line $3$.
But on the second loop iteration, we have that $\texttt{x=4}$ before line $3$,  so $\predicate$ holds and $\predicate'$ does not.
$\blacksquare$    
\end{example}

\begin{definition}\label{def:stable}
A property $\property=\tuple{\predicate, \linenumber}$ is \emph{stable} for $\program$, denoted $\stable(\program,\property)$, if, for each execution of the program, either \predicate always evaluates to true at line \linenumber or \predicate always evaluates to false at line \linenumber.
\end{definition}
\begin{example}
\revise{Consider an example to illustrate the definition of stability.}
\begin{verbatim}
Line 1: uint32_t x=rand();
Line 2: assert(x==1);
\end{verbatim} 
\revise{The property in Line 2 is stable, as it always evaluates to true or false within a single execution.$\blacksquare$}

\end{example}

An invariant must be stable, but a property that is not an invariant might still be stable. For example, any property on a program without loops is stable. If $\property$ is stable, then $\neg\property$ is also stable.  Lemma~\ref{lemma:stable} exploits stable 
invariants (see a proof in \extref{App.~\ref{app:sec:def}}).

\begin{lemma}
Consider a program \program, two properties \property, \propertyAlt on \program, and a program $\programAlt\is\asm(\program,\propertyAlt)$.
The property \property is an invariant on $\program$ if: (i) \propertyAlt is stable for $\program$; (ii) \propertyAlt implies \property with respect to $\program$; and (iii) $\neg\propertyAlt$ implies \property with respect to $\program$. More formally:
$\stable(\program,\propertyAlt) \wedge (\imply{\propertyAlt}{\program}{\property}) \wedge (\imply{\neg\propertyAlt}{\program}{\property})\Rightarrow \invariant(\program, \property).$
\label{lemma:stable}
\end{lemma}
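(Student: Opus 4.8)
The plan is to fix an arbitrary execution $e$ of $\program$ and show that $\property$ holds along $e$; since $e$ is arbitrary this yields $\invariant(\program,\property)$. Writing $\propertyAlt = \tuple{\predicate,\linenumber}$, the key idea is to use stability to obtain a clean two-way split on the behavior of $\predicate$ along $e$, and then match each branch to one of the two assumed programs appearing in hypotheses (ii) and (iii).

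First I would invoke hypothesis (i), $\stable(\program,\propertyAlt)$. By the definition of stability, for the single execution $e$ either $\predicate$ evaluates to true at line $\linenumber$ at every visit during $e$, or $\predicate$ evaluates to false at $\linenumber$ at every visit during $e$. These cases are exhaustive, and this dichotomy is exactly why stability is required: without it, a single execution could see $\predicate$ hold on some visits to $\linenumber$ and fail on others, so neither assumed program would faithfully contain $e$, and the argument would break.

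In the first case, the assumption $\propertyAlt$ never triggers early termination along $e$, because by the definition of $\asm$ termination occurs only when $\predicate$ fails at $\linenumber$. Hence $e$ is, step for step, also an execution of $\programAlt = \asm(\program,\propertyAlt)$. Hypothesis (ii), $\imply{\propertyAlt}{\program}{\property}$, states precisely that $\property$ is an invariant on $\asm(\program,\propertyAlt)$, so $\property$ holds along $e$. In the second case, $\neg\predicate$ holds at $\linenumber$ at every visit, so the assumption $\neg\propertyAlt = \tuple{\neg\predicate,\linenumber}$ likewise never terminates $e$ early, making $e$ an execution of $\asm(\program,\neg\propertyAlt)$; hypothesis (iii), $\imply{\neg\propertyAlt}{\program}{\property}$, then gives that $\property$ holds along $e$. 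Since $\property$ holds in both cases and $e$ was arbitrary, $\invariant(\program,\property)$ follows.

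The main obstacle I anticipate is the semantic bookkeeping in the claim that ``if $\predicate$ holds at $\linenumber$ throughout $e$, then $e$ is an execution of $\asm(\program,\propertyAlt)$.'' This requires appealing carefully to the operational meaning of $\asm$: the assumed program agrees with $\program$ at every step and can only cut an execution short, doing so exactly when the assumed predicate fails at its line. One must also be careful that the line of $\property$ may differ from $\linenumber$ and that $\linenumber$ can be visited many times (for instance inside a loop), so the universally quantified ``at every visit'' phrasing in both the stability definition and the invariant definition has to be applied consistently. Once this correspondence between complete executions of $\program$ and executions of the two assumed programs is pinned down, the remainder of the argument is immediate.
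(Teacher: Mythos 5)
Your proof is correct and follows essentially the same route as the paper's: use stability to split each execution into the case where $\predicate$ always holds at line $\linenumber$ (so the execution survives intact in $\asm(\program,\propertyAlt)$ and hypothesis (ii) applies) and the case where it always fails (so hypothesis (iii) applies). The paper's version is terser, compressing the ``$e$ is also an execution of the assumed program'' step into ``the property holds by the definition of implication,'' but the argument is the same; your explicit handling of that correspondence is a reasonable elaboration, not a different approach.
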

\vspace{-0.5cm}

Assume we have a verifier $\verifier : \Program \times \powerset{\Property} \times \Property \mapsto \set{\true, \false, \unknown}$,
which takes as inputs a program \program, a set of assumptions \assumptions and a property \property, and checks whether \assumptions implies \property. 
More precisely, given set of assumptions  $\assumptions = \{\propertyAlt_1,\ldots,\propertyAlt_n\}$, we construct 
a new program $\programAlt = \asm(\asm((\ldots, \asm(\program,\propertyAlt_1)),\propertyAlt_{n-1}),\propertyAlt_n)$, and the verifier checks if $p$ is an invariant on this program. Hence, a statement that \assumptions implies \property on $\program$ means that \property is an invariant on \programAlt.
%
%
We further assume that \verifier is \emph{sound}, meaning if \verifier returns \true, then \assumptions implies \property, and if \verifier returns \false, then \assumptions does not imply \property. Note that \assumptions can be empty, in which case the verifier just checks whether \property is an invariant. When the verifier \verifier returns \true, we say \property is proven; and when \verifier returns \false, we say the property is \emph{falsified}. 
\verifier is incomplete, meaning that \verifier can return \unknown.

In practice, \verifier can be instantiated with an automated program verifier such as \cbmc~\citep{cbmc}, \esbmc~\citep{esbmc}, or \uautomizer~\citep{uautomizer}. We provide an overview of the main techniques employed by these tools in \extref{App.~\ref{app:background}} and note here that a crucial challenge shared across existing verifiers is the automatic decomposition of a verification task into smaller, more manageable sub-tasks. This decomposition requires high-level reasoning that is difficult to automate using conventional formal methods, but plausibly could be performed by LLMs, with their documented code-understanding capability. However, it is crucial to preserve soundness when LLMs are used to automatically perform this high-level reasoning in program verification tasks.

\section{\sys: Integrating LLMs in Program Verification}
We present \sys, a proof calculus that combines LLMs and automated reasoners to prove properties of programs. The calculus operates over a \emph{configuration}, which is either one of the distinguished symbols \set{\success, \fail} or a tuple \tuple{\program, \assumptions, \trail}, where \program is a program, \assumptions is either $\varnothing$ or a singleton representing the assumption, and \trail is a list of properties referred to as \emph{proof goals}.  \trail itself is referred to as a \emph{trail}. The last element of \trail represents the current property to prove. 
\revise{
We use the notation $\concat$ 
to denote a concatenation of two lists. In particular, $\trail = \trail' \concat \property$ means that $\trail$ is a concatenation of a trail $\trail'$
and a property $\property$, where $\property$ is the last element of $\trail$.
}
The rules describe the conditions under which a certain configuration can transform into another configuration. In this calculus, deciding whether $\invariant(\program, \property)$ holds reduces to finding a sequence of valid rule applications from the \emph{starting configuration} \tuple{\program, \varnothing, [\property]} to either \success or \fail. 

Our calculus performs oracle calls to LLMs to propose new properties and revise them.
The  oracle $\prompterPropose: \Program  \times \Property \mapsto \powerset{\Property}$  
proposes new properties, given a program and the current proof goal as inputs. 
A key hypothesis here is that LLMs are capable of generating new properties that are likely to (i) be invariants, and (ii) imply the proof goal given in a prompt. We will discuss strategies to generate prompts in Section~\ref{sec:strategy}. Importantly, properties generated by an LLM are treated as assumptions until we can prove that they are invariants of the original program.
The oracle $\prompterRepair$  revises previously proposed properties, e.g.,
if we determine that a property \propertyAlt previously produced by $\prompterPropose$ does not hold or does not imply the current proof goal. In this case, we request an LLM to repair 
\propertyAlt. We have $\prompterRepair: \Program \times \Property \times \Property \times \set{\false,\unknown} \mapsto \powerset{\Property}$,
whose inputs comprise a program, two properties, and a solver return value.
The first property is the current proof goal, and the second property \propertyAlt is an assumption previously proposed by oracles.
The output of \prompterRepair is a new set of properties. In practice, we implement it with a prompt to an LLM to either correct or strengthen \propertyAlt (see Section~\ref{sec:strategy}).
Finally, the calculus performs an external call to a verifier \verifier to check whether a property holds.


The proof rules of \sys are shown in Fig.~\ref{fig:rules}. 
Each rule defines a set of conditions that must hold for the rule to be applicable. Note 
again that these conditions permit invocations of LLMs and/or verifiers. The rules within the calculus can be partitioned into four groups.

\begin{figure}[t!]
\small
\centering
\begin{prooftree}
\AxiomC{$\trail\is \trail'\concat\property \quad \verifier(\program, \assumptions,\,\property) = \unknown \quad \propertyAlt \in \prompterPropose(\program, \, \property)$ }
\RightLabel{(\rulepropose)}
\UnaryInfC{$\program, \assumptions, \trail \transitionInto \program, \set{\propertyAlt}, \trail$}
\end{prooftree}
\begin{prooftree}
\AxiomC{$ 
\assumptions\is\set{\propertyAlt}
\quad
\trail\is\trail'\concat\property \quad \verifier(\program, \assumptions,\,\property) = \true$ }
\RightLabel{(\ruledecide)}
\UnaryInfC{$\program, \assumptions, \trail \transitionInto \program, \varnothing, \trail\concat\propertyAlt$}
\end{prooftree}
\begin{prooftree}
\AxiomC{$
\trail\is\trail'\concat\property\concat\propertyAlt \quad 
\verifier(\program, \assumptions, \,\propertyAlt) \neq \true
\quad \propertyAlt' \in \prompterPropose(\program, \, \property)$ }
\RightLabel{(\rulebacktrack)}
\UnaryInfC{$\program, \assumptions, \trail \transitionInto \program, \set{\propertyAlt'},\, \trail'\concat\property$}
\end{prooftree}
\begin{prooftree}
\AxiomC{$
\assumptions\is\set{\propertyAlt}
\quad
\trail\is\trail'\concat\property \quad \verifier(\program, \assumptions,\,\property) = \unknown \quad
\propertyAlt' \in \prompterRepair(\program,\, \property, \propertyAlt, \unknown)$ }
\RightLabel{(\rulerepairone)}
\UnaryInfC{$\program, \assumptions, \trail\transitionInto \program, \set{\propertyAlt'},\, \trail'\concat\property$}
\end{prooftree}
\begin{prooftree}
\AxiomC{$
\assumptions\is\varnothing
\quad
\trail\is\trail'\concat\property\concat\propertyAlt 
\quad
\verifier(\program, \assumptions,\,\propertyAlt) = \false \quad 
\propertyAlt' \in \prompterRepair(\program,\, \,\property, \propertyAlt,\, \false)$ }
\RightLabel{(\rulerepairtwo)}
\UnaryInfC{$\program, \assumptions, \trail\transitionInto \program, \set{\propertyAlt'},\, \trail'\concat\property$}
\end{prooftree}
\begin{prooftree}
\AxiomC{$\assumptions\is\varnothing\quad \trail\is\trail'\concat\property \quad \verifier(\program, \assumptions,\,\property) = \true$ }
\RightLabel{(\rulesuccessone)}
\UnaryInfC{$\program, \assumptions, \trail \transitionInto \success$}
\end{prooftree}
\begin{prooftree}
\AxiomC{$
\assumptions\is\varnothing
\quad
\trail\is\trail'\concat\property\concat\propertyAlt
\quad \stable(\program,\propertyAlt) \quad \verifier(\program, \set{\neg\propertyAlt},\,\property)=\true$ }
\RightLabel{(\rulesuccesstwo)}
\UnaryInfC{$\program, \assumptions, \trail\transitionInto \success$}
\end{prooftree}
\begin{prooftree}
\AxiomC{$
\trail\is [\property]
\quad
\verifier(\program, \assumptions,\,\property) = \false$ }
\RightLabel{(\rulefail)}
\UnaryInfC{$\program, \assumptions, \trail \transitionInto \fail$}
\end{prooftree}
\vspace{-2mm}
\caption{Deductive rules of the \sys calculus.}
\label{fig:rules}
\vspace{-0.5cm}
\end{figure}

The first group contains rules that are responsible for generating new proof goals for specific configurations. These rules are 
\rulepropose, \rulerepairone, and \rulerepairtwo.
The \rulepropose rule states that if the verifier is unable to prove or disprove the current proof goal \property, we can invoke the \prompterPropose oracle to obtain a property \propertyAlt and update \assumptions to be \set{\propertyAlt}. 
The \rulerepairone rule can be applied when the current assumption \propertyAlt is not sufficient for the verifier to prove the current proof goal \property. In this case, we can use the oracle \prompterRepair to propose ways to strengthen \propertyAlt and choose one of them, \propertyAlt', as the new assumption. The \rulerepairtwo rule can also be applied when \propertyAlt is already in the trail but is falsified by the verifier \verifier. In this case, we can use \prompterRepair to repair \propertyAlt and update \assumptions accordingly.

The second group
contains only the \ruledecide rule.  This rule allows
an assumption \propertyAlt to become a new goal (i.e., be appended to \trail) if the verifier \verifier is able to prove that \propertyAlt implies the current goal.

The third group allows \sys to recover from faulty assumptions. The \rulebacktrack rule is applicable when there are at least two elements in the trail and the verifier cannot prove the current proof goal.  It allows \sys to revert to the previous proof goal (the second to last property in the trail \trail) and pick a different assumption suggested by \prompterPropose.  Note that \rulebacktrack might not be the only applicable rule. For example, \rulerepairone or \rulerepairtwo could also be applicable. In practice, we use a strategy to decide between multiple applicable rules. This is discussed in Sec.~\ref{sec:strategy}. 

The final group specifies three termination conditions for the calculus.
The \rulesuccessone rule states that whenever the assumption is empty and the verifier is able to prove the current proof goal (i.e., the last property \property in the trail \trail), we can transition into the \success state. Note that if the verifier can prove the original property, then this rule can be directly applied to the starting configuration to reach \success.
%
\rulesuccesstwo states that if the last two elements of the trail \trail are \property and \propertyAlt, the current proof goal $\propertyAlt := \tuple{\phi, \linenumber}$ is stable (as defined in Sec.~\ref{sec:prelim}), and the verifier is also able to also prove \property under the assumption\tuple{\neg\phi, \linenumber}, then \property is an invariant (as a consequence of Lemma~\ref{lemma:stable}), and we can transition to \success. The \rulesuccesstwo rule constitutes a way to utilize an \emph{incorrect sub-goal} \propertyAlt proposed by the LLM-based oracles to decompose the verification task: we separately reason about the cases when \propertyAlt holds and when it does not hold. 
Finally, if the verifier \verifier proves that the original property is not an invariant, whether under an assumption or not, then we transition to the \fail state using \rulefail.

Note that the program $\program$ remains unchanged in every rule. We keep it as part of the state for two reasons. First, it is convenient for keeping the calculus self-contained, as \program is an input to the verifiers and the oracles. Second, in the future, it might be possible to augment the calculus with rules that update $\program$, by, for example, rewriting the program using LLMs in an invariant-preserving manner. 

We state the following soundness properties about \sys. The proof is presented in \extref{App.~\ref{app:subsec:soundness}}.

\begin{theorem}[Soundness]
Given a property $\property_0$ and a program \program, if \success is reached by a sequence of valid rule applications starting from \tuple{\program, \varnothing, [\property_0]}, then $\property$ is an invariant on \program.
\label{thm:soundness}
\end{theorem}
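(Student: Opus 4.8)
The plan is to prove soundness by attaching a structural invariant to the trail, showing it is preserved by every transition rule, and then reading off the conclusion from whichever success rule fires. Writing a generic reachable trail as $\trail = [\property_0, \property_1, \ldots, \property_k]$, the invariant I would carry along is
\[
\imply{\property_i}{\program}{\property_{i-1}} \quad \text{for all } i \in \{1, \ldots, k\}.
\]
In words, each proof goal implies its predecessor in the trail with respect to $\program$. Call this the \emph{trail invariant}. I would establish that it holds in every configuration reachable from the starting configuration $\tuple{\program, \varnothing, [\property_0]}$ by induction on the length of the rule-application sequence.

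For the base case the starting trail is $[\property_0]$, so the invariant holds vacuously. For the inductive step I would inspect each rule. The only rule that \emph{appends} to the trail is \ruledecide, which adds $\propertyAlt$ precisely when $\verifier(\program, \set{\propertyAlt}, \property) = \true$; by soundness of \verifier this certifies $\imply{\propertyAlt}{\program}{\property}$, which is exactly the newly created consecutive pair, so the invariant extends. The rules \rulebacktrack and \rulerepairtwo only \emph{drop} the last trail element, deleting one pair and leaving all earlier pairs intact. The rules \rulepropose and \rulerepairone leave the trail unchanged, modifying only \assumptions. Hence every rule preserves the trail invariant.

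It then remains to handle the two success rules. If \success is reached via \rulesuccessone, its guard gives $\assumptions = \varnothing$ and $\verifier(\program, \varnothing, \property_k) = \true$, so $\property_k$ is an invariant on \program. Combining this with the trail invariant and iterating the first bullet of Proposition~\ref{prop:imply} (if $\property_i$ is an invariant and $\imply{\property_i}{\program}{\property_{i-1}}$ then $\property_{i-1}$ is an invariant) from $i = k$ down to $i = 1$ yields $\invariant(\program, \property_0)$. If instead \success is reached via \rulesuccesstwo, the trail ends $\ldots \concat \property \concat \propertyAlt$, so $\property = \property_{k-1}$ and $\propertyAlt = \property_k$; the guard supplies $\stable(\program, \propertyAlt)$ and $\imply{\neg\propertyAlt}{\program}{\property}$, while the trail invariant supplies the matching $\imply{\propertyAlt}{\program}{\property}$ (established when $\propertyAlt$ was appended by \ruledecide). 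Feeding these three facts into Lemma~\ref{lemma:stable} gives $\invariant(\program, \property_{k-1})$, and the same downward iteration of Proposition~\ref{prop:imply} again yields $\invariant(\program, \property_0)$.

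The only place any real content lives is the preservation argument, and the main thing to get right is confirming that no rule ever rewrites an interior trail element or appends one without a verifier-certified implication. The backtracking and repair rules are the tempting failure points, but they merely truncate the trail or touch \assumptions, so an earlier implication can never be invalidated; this is what makes the chain argument legitimate at a success step. I would also remark that \rulefail and the oracle outputs play no role in soundness, since unverified LLM suggestions enter only as \emph{assumptions} and are never promoted to invariants without a \true answer from \verifier.
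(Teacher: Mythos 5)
Your proof is correct and follows essentially the same route as the paper: an induction over the rule sequence establishing an implication invariant on the trail, followed by a case split on \rulesuccessone versus \rulesuccesstwo using Proposition~\ref{prop:imply} and Lemma~\ref{lemma:stable}. The only cosmetic difference is that you maintain all consecutive pairwise implications and iterate Proposition~\ref{prop:imply} down the chain at the end, whereas the paper folds the transitivity step (Proposition~\ref{prop:chain-imply}) into the induction itself, so that every trail element is shown to imply $\property_0$ directly (its Lemma~\ref{lemma:implication}), keeping the consecutive-pair fact only for the last two elements (its Lemma~\ref{lemma:implication2}).
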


\begin{theorem}[Soundness 2]
Given a property $\property_0$ and a program \program, if \fail is reached by a sequence of valid rule applications starting from \tuple{\program, \varnothing, [\property_0]}, then $\property$ is not an invariant on \program.
\label{thm:soundness2}
\end{theorem}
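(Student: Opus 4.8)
The plan is to use the fact that \fail is reachable through only one rule, \rulefail, and then to combine the soundness of \verifier with the second bullet of Proposition~\ref{prop:imply}; I read the ``$\property$'' in the statement as the original goal $\property_0$.

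First I would observe that \rulefail is the unique rule whose conclusion is \fail. Hence any valid derivation ending at \fail takes its final step via \rulefail from some configuration \tuple{\program, \assumptions, \trail}, and the premises of that rule force $\trail = [\property]$ to be a singleton and $\verifier(\program, \assumptions, \property) = \false$.

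Next I would establish a structural invariant: along any derivation starting from \tuple{\program, \varnothing, [\property_0]}, the head (first element) of \trail is always $\property_0$. This follows by induction on the number of rule applications. The base case is immediate from the starting configuration. For the inductive step I would inspect each rule that touches \trail: \rulepropose and \rulerepairone leave \trail unchanged; \ruledecide only appends a property at the end; and \rulebacktrack and \rulerepairtwo only delete the last element of a trail that, by their premises, already has at least two elements. In every case the head of \trail is preserved. Consequently, when \rulefail fires on the singleton $[\property]$, we must have $\property = \property_0$, so $\verifier(\program, \assumptions, \property_0) = \false$.

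Finally I would appeal to the soundness of \verifier: a \false answer means that \assumptions does not imply $\property_0$ on \program. I then split on \assumptions, which is either $\varnothing$ or a singleton $\set{\propertyAlt}$. If $\assumptions = \varnothing$, the verifier checks $\property_0$ directly, so $\neg\invariant(\program, \property_0)$ is immediate. If $\assumptions = \set{\propertyAlt}$, then $\property_0$ is not an invariant on $\asm(\program, \propertyAlt)$, and the second bullet of Proposition~\ref{prop:imply}, instantiated with $\programAlt = \asm(\program, \propertyAlt)$, yields $\neg\invariant(\program, \property_0)$. I expect the main obstacle to be the structural trail invariant, the only step requiring induction over rule applications: one must check that the head of \trail is never disturbed, in particular that the ``at least two elements'' premises of \rulebacktrack and \rulerepairtwo guarantee the deleted element is never $\property_0$. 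Everything else reduces to the verifier's soundness and the already-proved Proposition~\ref{prop:imply}.
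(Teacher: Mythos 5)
Your proposal is correct and follows essentially the same route as the paper's proof: \fail is reachable only via \rulefail, the verifier's soundness yields that $\property_0$ is not an invariant under the (possibly nonempty) assumption, and the second bullet of Proposition~\ref{prop:imply} transfers this to \program itself. The only difference is that you explicitly establish by induction that the head of the trail is always $\property_0$ (so the singleton in \rulefail's premise must be $[\property_0]$), a detail the paper's one-line proof leaves implicit.
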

\begin{figure}[t]
\centering
\resizebox{\textwidth}{!}{  
 \begin{tikzpicture}[node distance=2pt,
        box/.style={align=left,inner sep=1pt, fill=gray!5},
        marrow/.style={single arrow, 
              single arrow head extend=1mm,
              execute at begin node={\strut}}]
\scriptsize
\node (A) [box, draw] {
\begin{lstlisting}
uint32_t x=0;
while (rand()){
x+=4;
assert(x!=30);
}
\end{lstlisting}    
};

\node [above = 1pt of A, draw=none]{\verifiercolor{\verifier: \unknown}};

\node (B) [marrow, right=of A, draw]{
\rulepropose
};
\node [above = 1pt of B, draw=none, align=center]{\llmcolor{\prompterPropose}};

\node (C) [box, right=of B, draw] {
\begin{lstlisting}
uint32_t x=0;
while (rand()){
assume(x%2==0);
x+=4;
assert(x!=30);
}
\end{lstlisting}    
};
\node [above = 1pt of C, draw=none]{\verifiercolor{\verifier: \unknown}};

\node (D) [marrow, right=of C, draw]{\textbf{Rep. 1}};
\node [above = 1pt of D, draw=none, align=center]{\llmcolor{\prompterRepair}};

\node (E) [box, right=of D, draw]{
\begin{lstlisting}
uint32_t x=0;
while (rand()){
assume(x%4==0);
x+=4;
assert(x!=30);
}
\end{lstlisting} 
};
\node [above = 1pt of E, draw=none]{\verifiercolor{\verifier: \true}};

\node (F) [marrow, right=of E, draw]{\ruledecide};
\node (G) [box, right=of F, draw]{
\begin{lstlisting}
uint32_t x=0;
while (rand()){
assert(x%4==0);
x+=4;
}
\end{lstlisting} 
};

\node [above = 1pt of G, draw=none]{\verifiercolor{\verifier: \true}};

\node (H) [marrow, right=of G, draw]{\textbf{Succ. 1}};

\node (I) [box, below=20pt of C, draw]{
\begin{lstlisting}
uint32_t x=0;
while (rand()){
assume(x%4==1);
x+=4;
assert(x!=30);
}
\end{lstlisting} 
};

\node [above = 1pt of I, draw=none]{\verifiercolor{\verifier: \true}};

\node (J) [marrow, right=of I, draw]{\ruledecide};
\node (K) [box, right=of J, draw]{
\begin{lstlisting}
uint32_t x=0;
while (rand()){
assert(x%4==1);
x+=4;
}
\end{lstlisting} 
};

\node [above = 1pt of K, draw=none]{\verifiercolor{\verifier: \false}};

\node (L) [marrow, right=of K, draw, align=center]{\textbf{Backtrack}};
\node [above = 1pt of L, draw=none]{\llmcolor{\prompterPropose}};

\node (M) [box, right=of L, draw]{
\begin{lstlisting}
uint32_t x=0;
while (rand()){
assume(x%4==0);
x+=4;
assert(x!=30);
}
\end{lstlisting} 
};
\node [above = 1pt of M, draw=none]{\verifiercolor{\verifier: \true}};

\node (N) [marrow, right=of M, draw]{...};

\node (O) [marrow, draw, rotate=-45] at (1.9, -1.1) {\rulepropose};
\node [above = 1pt of O, draw=none, align=center, rotate = -45]{\llmcolor{\prompterPropose}};

\tikzstyle{roundedbox} = [draw=none, text width=4cm, align=center, inner sep=2pt]

\node (M) [roundedbox] at (0.7,-2.4) {
\vspace{-2mm}
\begin{verbatim}
...
x+=4;
assert(x!=30);
...
List invariants that prove 
the assertion. Your answer 
should look like
assert(...) //Line number
------------------------
assert(x % 2 == 0); //Line 2
assert(x % 4 == 1); //Line 2
\end{verbatim} 
};

\coordinate (start) at (0.8,-1.8);
\coordinate (end1) at (1.4,-1.2);
\coordinate (end2) at (1.5,-0.3);

\draw (start) -- (end1);
\draw (start) -- (end2);
\end{tikzpicture}
}
\vspace{-3mm}
\caption{A running example of executing the \sys calculus.}
\label{fig:running-example}
\vspace{-4mm}
\end{figure}
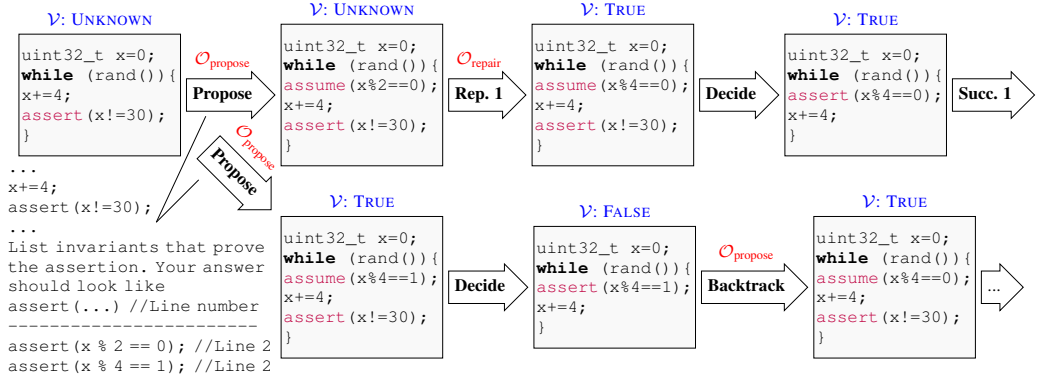
\begin{example}
To provide more intuition about the proof system and to motivate the design choices when instantiating \sys, we consider again our running example. Figure~\ref{fig:running-example} illustrates how \sys can be used to verify properties in practice. In Figure~\ref{fig:running-example} each frame represents a state of the program. Transitions between states are depicted by arrows, with each arrow marked with the rule applied to execute this transition. 
In this example, our goal is to prove the property \texttt{x!\!=30} in a while loop that keeps adding 4 to an unsigned 32-bit integer variable \texttt{x}. We note that this particular verification task is adapted from a similar one in the SV-COMP competition.\footnote{\url{https://sv-comp.sosy-lab.org/2023/results/results-verified/META_ReachSafety.table.html\#/table?filter=id_any(value(jain_5-2))}}
While seemingly trivial, during the competition, 19 out of the 24 participating tools (including the overall winner of the competition \uautomizer) were not able to solve this benchmark. 

The initial configuration is \tuple{\program, \varnothing, [\property])}, where \program is the given program and $\property\is\tuple{\texttt{x!\!=30}, 3}$.\footnote{3 is the line number (in the snippet) where the predicate is asserted.} 
Suppose the verifier \verifier is unable to solve this problem and returns \unknown. In this case, we need to generate a new proof goal, so the only rule we can apply is \rulepropose. To do so, we invoke the LLM-based oracle \prompterPropose to obtain a set of new properties that are potentially themselves invariants and might help prove the property. An example prompt is given in the bottom left of Figure~\ref{fig:running-example}. Note that this is not the exact prompt we found most useful, but we defer a full discussion of prompts and \emph{prompting strategies} to Sec.~\ref{sec:strategy}. Suppose the oracle returns two potential predicates, both of which should hold at the beginning of the while loop (line 3): \texttt{x\%2==0} and \texttt{x\%4==1}. The \rulepropose rule allows us to make one of them the current assumption. 

\emph{Case (\texttt{x\%2==0}):} The top row illustrates what happens when we transition to $\tuple{\program, \set{\propertyAlt\is\tuple{\texttt{x\%2==0}, 3}}, [\property]}$. While \propertyAlt is indeed an invariant, it does not help to prove the assertion, and \verifier still returns \unknown. This means that the \rulerepairone rule is applicable, which invokes the oracle \prompterRepair to strengthen \propertyAlt. Suppose in this case, the oracle suggests the predicate $\propertyAlt'\is\texttt{x\%4==0}$, which clearly implies the original property \texttt{x!\!=30}. Now, suppose $\verifier(\program, \set{\propertyAlt'}, \property)$ returns \true. We can apply the \ruledecide rule and transition to \tuple{\program, \varnothing, [\property, \propertyAlt']}, making \propertyAlt' the current proof goal. Proving \propertyAlt' is arguably easier because \texttt{x\%4==0} is \emph{inductive} (i.e., if it holds in one iteration, then it will hold in the next iteration), making conventional automated reasoning techniques such as k-induction applicable.
Now, if $\verifier(\program, \varnothing, \propertyAlt') = \true$, we can apply \rulesuccessone and transition to the \success state, thus completing the proof.

\revise{We discuss the case \texttt{x\%4==1} in \extref{App.~\ref{app:subsec:soundness}}.}
%
$\blacksquare$
\end{example}

\section{Instantiating the \sys calculus}
\label{sec:strategy}


In this section, we present strategies for instantiating \sys as an automated procedure. While we showed that the \sys calculus is sound, there are no guarantees that it terminates. Here, we will discuss two sources of non-termination. 

The first one corresponds to an unbounded sequence of suggestions for new sub-goals. Concretely, when trying to prove a particular proof goal \property, we could get stuck if $\verifier(\program, \set{\propertyAlt}, \property) = \unknown$ or $\verifier(\program, \varnothing, \propertyAlt)=\false$ for each proposed assumption \propertyAlt. This could occur as a result of limitations in either the LLM or the verifier. One way to avoid this type of non-termination is by putting an upper bound on the number of proposed assumptions for each proof goal. That is, for any proof goal \property, we require that $\verifier(\program, \set{\propertyAlt}, \property)$ is invoked for at most \numproposals different values of \propertyAlt.

The second source of non-termination is the potentially unbounded depth of the trail $\trail$. Concretely, it is possible to construct an infinite sequence of \rulepropose and \ruledecide transitions, where: (i) the verifier returns \unknown on the current proof goal; (ii) the oracle proposes an assumption that is not invariant but implies the current proof goal; (iii) the verifier proves the implication; (iv) the assumption becomes the new proof goal; and (v) this process repeats. This can be avoided by adding a side condition to the rules requiring that properties proposed by oracles ($\propertyAlt=\tuple{\predicateAlt,\linenumber'}$) must contain a smaller program line number than the one in the current proof goal ($\property=\tuple{\predicate,\linenumber}$), that is, 
\renewcommand{\theequation}{Condition \arabic{equation}}
\begin{equation}
\tuple{\predicateAlt, \linenumber'} \in \prompter_{*}(\program, \tuple{\predicate,\linenumber}, \ldots) \Rightarrow \linenumber' < \linenumber
\label{cond:lowerline}
\end{equation}
Based on these strategies, a terminating (by Thm.~\ref{thm:termination} at the end of this section) and sound (by Thm.~\ref{thm:soundness}) algorithm for checking whether a property \property is an invariant on a program \program is presented in Alg.~\ref{alg:sys}. 
Alg.~\ref{alg:sys} is a recursive procedure $\sysFunc$. It takes a program $\program$ and a property $\property$ as inputs.
If $\sysFunc$ returns \success, then the property is an invariant. If $\sysFunc$ returns \fail, then the property is not an invariant. The function can also return \unknown if the analysis is inconclusive.
At a high level, Alg.~\ref{alg:sys} searches for a potential subgoal \propertyAlt that implies the current goal \property (lines~\ref{algo:gen:start}--\ref{algo:gen:end}). If such a \propertyAlt is identified in line~\ref{algo:imply:check}, we recurse to prove $\propertyAlt$ (line ~\ref{algo:recurse}). The \texttt{while} loop starting at line~\ref{algo:while:start} ensures that at most $k$ attempts can be utilized to generate a new subgoal for $p$.
The comments in Alg.~\ref{alg:sys} indicate which lines correspond to specific proof rules.
The algorithm is sound as it only applies the rules of the calculus.
A full description of Alg~\ref{alg:sys}, including a proof of termination can be found in \extref{App.~\ref{app:sec:strategy}}.

\begin{algorithm}[t]
\small
\begin{algorithmic}[1]
\State {\bfseries Input:} A program \program, a property \property.
\State {\bfseries Output:} \success only if $\invariant(\program, \property)$; \fail only if $\neg\invariant(\program, \property)$; and \unknown if inconclusive. 
\State {\bfseries Parameters:} Verifier \verifier, oracles \prompterPropose and \prompterRepair (which satisfy  ~\ref{cond:lowerline}), number of proposals \numproposals
\Function{$\sysFunc$}{$\program, \property$}
\State {$d \mapsto \verifier(\program, \varnothing, \property)$}\label{algo:init:check}
\If {$d \is \false$} {\return \fail} \Comment{\rulefail}
\ElsIf {$d \is \true$} {\return \success} \Comment{\rulesuccessone}
\Else 
\State {$i, Q \mapsto 0, \prompterPropose(\program, \property)$} \label{algo:gen:start}
\While {$i < \numproposals \land |Q|>0$}  \label{algo:while:start}
\State {$i \mapsto i + 1$}
\State {$\propertyAlt \mapsto \pop(Q)$}
\State {$e \mapsto \verifier(\program, \set{\propertyAlt}, \property)$}  \label{algo:imply:check}  \Comment{\rulepropose/\rulebacktrack}
\If {$e\is \false$} {\return \fail} \Comment{\rulefail}  \label{algo:imply:check:fail}
\ElsIf {$e \is \true$} 
\State {$f \mapsto \sysFunc(\program, \propertyAlt)$} \Comment{\ruledecide}  \label{algo:recurse}
\If {$f \is \success$} {\return \success} \Comment{\rulesuccessone} \label{algo:recurse:output:start} 
\ElsIf {$\stable(\program,\propertyAlt) \land (\verifier(\program, \set{\neg\propertyAlt}, \property) \is \true)$} 
{\return \success} \Comment{\rulesuccesstwo}  \label{algo:recurse:output:flip}
\ElsIf {$f \is \fail$}  \label{algo:recurse:output:fail}
{ $Q\mapsto \join(Q, \prompterRepair(\program, \property, \propertyAlt, \false))$}  \Comment{\rulerepairtwo}  \label{algo:recurse:output:repair}
\Else { \textbf{continue}}
\EndIf\label{algo:recurse:output:end}
\Else { $Q\mapsto \join(Q, \prompterRepair(\program, \property, \propertyAlt, \unknown))$} \Comment{\rulerepairone} \label{algo:gen:end}
\EndIf
\EndWhile
\State {\return \unknown}
\EndIf
\EndFunction
\end{algorithmic}
\caption{The \sys procedure \label{alg:sys}}
\end{algorithm}

\begin{theorem}[Termination]
Given a program \program, and a property \property on the program, Alg.~\ref{alg:sys} terminates with either \success, \fail, or \unknown.
\label{thm:termination}
\end{theorem}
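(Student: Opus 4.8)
The plan is to prove termination by exhibiting two control measures — one that bounds the \texttt{while} loop within a single invocation of $\sysFunc$, and one that bounds the depth of recursion — and then to conclude by well-founded induction on the latter. Throughout I assume that every external call (to the verifier $\verifier$ and to the oracles $\prompterPropose$ and $\prompterRepair$) returns in finite time; under this assumption only the control flow of Alg.~\ref{alg:sys} itself can obstruct termination, so it suffices to rule out the two sources of non-termination identified in the text.

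First I would bound the \texttt{while} loop inside a single call $\sysFunc(\program, \property)$. The guard is $i < \numproposals \land |Q| > 0$, and the counter $i$ is incremented on every iteration and never decremented. Hence, no matter how much the updates $Q \mapsto \join(Q, \cdot)$ enlarge $Q$, after at most $\numproposals$ iterations the condition $i < \numproposals$ fails and the loop exits. Since each iteration performs only finitely many terminating operations (a $\pop$, verifier calls, a stability check, and at most one recursive call), a single invocation — setting its recursive call aside — does a bounded amount of work.

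Next I would handle the recursion, which is the crux of the argument. The only recursive call is $f \mapsto \sysFunc(\program, \propertyAlt)$, and the argument $\propertyAlt$ is always popped from $Q$, which is initialized to $\prompterPropose(\program, \property)$ and only extended via $\join$ with outputs of $\prompterRepair(\program, \property, \propertyAlt, \cdot)$. In every such call the current proof goal $\property$ is the second argument, so Condition~\ref{cond:lowerline} applies: every property $\tuple{\predicateAlt, \linenumber'}$ returned by an oracle satisfies $\linenumber' < \linenumber$, where $\property = \tuple{\predicate, \linenumber}$. Therefore each recursive call is made on a property whose line number is strictly smaller than that of its caller, and since line numbers lie in $\Nat$, on which $<$ is well-founded, this furnishes a strictly decreasing measure that bounds the recursion depth.

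I would then close by strong induction on the line number $\linenumber$ of $\property$. In the base case, where $\linenumber$ is minimal, Condition~\ref{cond:lowerline} forces every oracle to return $\varnothing$, so $Q$ is empty, the loop body never executes, and the call returns immediately after the initial verifier check. For the inductive step, the hypothesis gives termination of $\sysFunc(\program, \propertyAlt)$ for every $\propertyAlt$ of line number below $\linenumber$; combined with the loop bound $\numproposals$ from the second step, every operation performed by $\sysFunc(\program, \property)$ then terminates, so the call itself terminates. A final inspection of the control flow confirms that every exit point returns one of \success, \fail, or \unknown, yielding the statement. The main obstacle is the recursion rather than the loop: the loop bound is immediate from the monotone counter, whereas termination of the recursion depends essentially on Condition~\ref{cond:lowerline} to supply the strictly decreasing line-number measure.
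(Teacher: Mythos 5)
Your proof is correct and follows essentially the same route as the paper's: the \texttt{while} loop is bounded by the counter reaching $\numproposals$, and the recursion terminates because Condition~\ref{cond:lowerline} forces every oracle-proposed subgoal to have a strictly smaller line number, yielding a well-founded decreasing measure with the base case where the oracles must return the empty set. Your write-up is somewhat more explicit than the paper's (e.g., in noting that $\join$ cannot defeat the monotone counter and in making the assumption that external calls terminate), but the decomposition and the key ideas are identical.
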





Note that, Alg.~\ref{alg:sys} is one of many possible instantiations of the \sys calculus. One can develop 
alternative strategies to apply the rules, e.g., by changing the frequency of the repair rules and the \rulepropose rules to balance the cost of LLM oracle calls. We evaluate one of the alternatives in \extref{App.~\ref{app:more-exp}}.
\section{Experiments}\label{sec:exp}

We have presented the \sys calculus and described a sound and terminating algorithm based on \sys. In this section, we investigate the following questions:
\begin{itemize}[leftmargin=*]
    \item 
    Can we develop a practical automated verification procedure based on  Alg~\ref{alg:sys}? [Yes]
    \item Is \sys competitive with existing end-to-end learning-based verification approaches? [Yes]
    \item Can \sys already prove hard benchmarks that are beyond the reach of state-of-the-art conventional program verifiers? [In several cases]
\end{itemize}

\subsection{Building an LLM-based program verifier}\label{subsec:implementation}

We report on several practical considerations when building a prototype of Alg.~\ref{alg:sys}. There are two types of external calls that Alg.~\ref{alg:sys} depends on. The first type is calls to \verifier. We use off-the-shelf verifiers in our framework that are extensively tested by the community (described in later paragraphs), so we have some expectations about their reliability and performance. On the other hand, the second type of calls, calls to LLM oracles, introduces more uncertainty, as LLMs are newer and are treated as black boxes.  In our framework,
the oracles \prompterPropose and \prompterRepair automatically prompt a GPT-family model through the OpenAI API and parse the output. We found that while GPT has great potential for generating sensible loop invariants, it still has practical limitations. We report several tactics that we found useful in practice. 
\begin{itemize}[leftmargin=*]
\item \textbf{Formatting the output}:
While investigating whether chain-of-thought (CoT)~\citep{wei2022chain} reasoning is useful when seeking new properties for \program and \property, We found that GPT's outputs, even when containing useful information, were verbose and often contained irrelevant or incorrect statements, making it difficult to extract invariants. 
To address this, we use in-context learning to encourage the LLM to format the output in a specific way. For example, adding \texttt{Your output should be "assert(...);// Line number"} to the prompt is sufficient for GPT to consistently generate outputs of exactly this format, without providing verbose explanations. 
\item \textbf{Inserting markers in the program}: We found that GPT is not good at counting program lines. Often, the generated predicate is otherwise correct, but the line number is slightly off. Unfortunately, an invariant at a wrong position is of no use to the verifier. To mitigate this challenge, we insert placeholder lines of the form \texttt{"// Line A"}, \texttt{"// Line B"} and prompt GPT to generate invariants of the form \texttt{assert(...);// Line name} (for those specific locations). As a simple practical heuristic, we insert placeholders right before loops and at the beginning of loops. 
\item \textbf{Ordering the proposal}: 
The output of an oracle call is non-deterministic for a given prompt, depending on the hyper-parameters of the call. Moreover, the oracles produce a set of properties and we need good heuristics to choose the order of trying those properties. A heuristic we found useful is to prompt GPT multiple times and order the properties by the frequency with which they are proposed (breaking ties by preferring shorter expressions). Moreover, instead of relying on string matching, we treat two proposals the same if their abstract syntax trees are equivalent. 
\end{itemize}

The exact prompts are described in \extref{App.~\ref{sec:prompts}}. For \verifier, we consider two state-of-the-art formal tools for C program verification, \esbmc~\citep{esbmc} and \uautomizer~\citep{uautomizer}. The former is based on k-induction and the latter is based on predicate abstraction. \esbmc and \uautomizer were the top two performing non-portfolio solvers in the reachability track of the most recent edition of the software verification competition (SV-COMP~\citep{SVCOMP}). Furthermore, \uautomizer was the overall winner.
By default, we impose a 30-second time limit for each invocation of the verifier. That is, if the verifier does not terminate within 30 seconds, then the verifier returns \unknown.%
\footnote{The source code and the benchmarks are publicly available at \url{https://github.com/wu-haoze/Lemur-program-verification}.}

\subsection{Loop invariant generation benchmarks}

A prominent approach in learning-based end-to-end program verification is Code2Inv~\citep{code2inv}, which uses reinforcement learning to train an invariant synthesizer to propose loop invariants. At a high level, to infer a loop invariant for a given program, Code2Inv learns a generative neural network whose goal is to repeatedly propose candidate invariants until an automated reasoning tool determines that the candidate is correct. Throughout this process, the generator is fine-tuned using the feedback from the automated reasoning tool. 
In this section, we study how \sys compares with this approach on the same benchmark set considered in the original Code2Inv work. The benchmark set contains 133 benchmarks, each containing a C program and a property expressed as an assert statement in the program. Each program contains a single loop and each loop can have nested if-then-else blocks (without nested loops). 
The assertion to check is always after the loop. 
Code2Inv is designed to generate invariants at the beginning of the while loop. We prompt the oracles to generate invariants at the same location.


We use the k-induction-based verifier, \esbmc, to check the implication (line 13 in Alg.~\ref{alg:sys}) which aligns with the verification procedure used in Code2Inv.
We report the number of solved instances as well as the number of failed suggestions (either the suggestion itself cannot be verified or \esbmc times out on the implication check). As a point of comparison, we report the statistics from the original Code2Inv approach.
Code2Inv was given a one-hour timeout. In addition, we also report \esbmc's performance on this set of benchmarks.  The result is shown in Table~\ref{tab:cov2inv} \revise{(the Time column shows the average time in seconds to solve a benchmark).}

\begin{wrapfigure}{r}{0.3\textwidth} 
\vspace{-8mm}
  \begin{center}
    \includegraphics[width=0.3\textwidth]{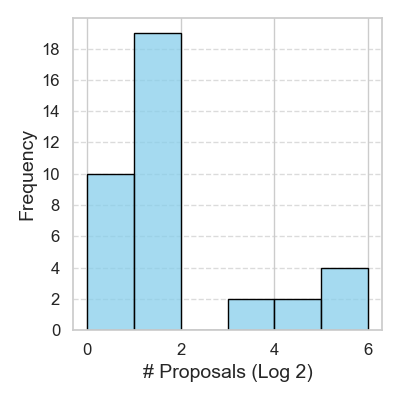}
    \vspace{-0.8cm}
    \caption{Nb. of proposals for \systable to solve a \revise{Code2Inv}  benchmark.}
    \vspace{-0.5cm}
    \label{fig:histogram-esbmc}
  \end{center}
\end{wrapfigure}
With a 10-minute timeout, \esbmc alone can solve 68 problems. On the other hand, \systable can solve 107 problems within the same time limit.
Surprisingly, this approach solves more instances than Code2Inv, which is specifically designed for invariant synthesis tasks. 
For problems unsolved by \esbmc but solved by \systable, a histogram of the values of $Log_2$ of the number of proposals is shown in Fig.~\ref{fig:histogram-esbmc}. While in most cases, Alg.~\ref{alg:sys} can produce the correct proposals within 4 attempts, 
there are benchmarks for which \systable requires many iterations to find the desired loop invariant, e.g.,
one of the benchmarks took 90 proposals.
\revise{In addition, we experimented with the GPT-3.5 turbo LLM model, denoted \sysalt,  as shown in Table~\ref{tab:cov2inv}. Note that \sysalt solved four fewer benchmarks and required more time and more calls to the GPT-3.5 turbo oracle.}

\begin{table}[t]
  \begin{subtable}{0.5\linewidth}
  \setlength\tabcolsep{2pt}
    \centering
    \begin{tabular}{cccc}
    \toprule
        Configurations &  Solved & Time \revise{(s)}  & \# proposal \\
            \midrule
     Code2Inv & 92 & -- & - \\
        \esbmc & 68 & 0.34 & 0 \\
        \revise{\sysalt} & \revise{103} & \revise{35.6} & \revise{8.6} \\
        \systable & 107 & 32.9 & 4.7 \\
         \bottomrule
    \end{tabular}
    \caption{
    The Code2Inv benchmarks.
    }
    \label{tab:cov2inv}
  \end{subtable}%
  \begin{subtable}{0.5\linewidth}
\setlength\tabcolsep{2pt}
    \centering
    \begin{tabular}{cccc}
    \toprule
        Configurations & Solved  & Time (\revise{s}) & \# proposals \\
            \midrule
     \uautomizer & 1 & 824.3 & 0 \\
        \esbmc & 1 & 675.7 & 0 \\
        \revise{\sysalt} & \revise{14} & \revise{162.2} & \revise{8.5} \\
        \systable & 25 & 234.5 & 7.2 \\
         \bottomrule
    \end{tabular}
    \caption{
    The  47 SV-COMP benchmarks.
    }
    \label{tab:svcomp}
  \end{subtable}
  \vspace{-4mm}
  \caption{Solved instances by  \esbmc,  \sys, and Code2Inv (\ref{tab:cov2inv}) or \uautomizer (\ref{tab:svcomp}) on two benchmark sets. We also report the average time and number of proposals for solved instances.}
  \vspace{-0.5cm}
  \label{tab:main}
\end{table}

\subsection{Solving hard SV-COMP benchmarks}

Next, we study \sys's ability to solve hard benchmarks from the 2003 edition of SV-COMP~\citep{SVCOMP}. 
We focus on benchmarks with less than 150 tokens (after removing comments and unnecessary headers, and after applying clang-formatting).
We select 47 benchmarks that \esbmc and \uautomizer are unable to solve within 10 minutes. 
The property is expected to hold in all benchmarks. 
We use a 15-minute per-instance timeout.

The results are shown in Table~\ref{tab:svcomp}. Impressively, with the guidance of the proof goals suggested by the LLM, \systable is able to solve 25 of the 47 SV-COMP benchmarks. While \esbmc and \uautomizer can each solve only 1 benchmark. 
Upon closer examination, 
8 of the solved instances contain two loops and
5 contain three or more loops. This suggests that \sys is capable of (i) handling programs with more than one loop; and (ii) boosting the performance of state-of-the-art conventional C program verifiers.

\begin{wrapfigure}{r}{0.3\textwidth} 
\vspace{-8mm}
  \begin{center}
    \includegraphics[width=0.3\textwidth]{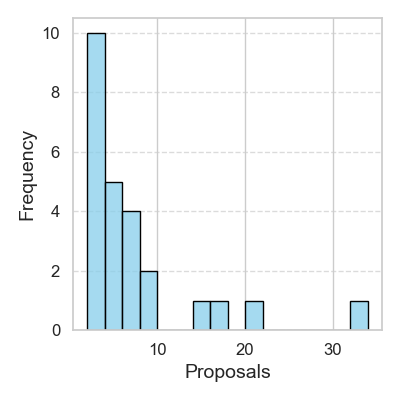}
    \vspace{-0.8cm}
    \caption{Nb. of proposals for \systable to solve an \revise{SV-COMP} benchmark.}
    \label{fig:histogram-svcomp}
    \vspace{-0.7cm}
  \end{center}
\end{wrapfigure}
The average number of proposals before solving a problem is higher compared to the Code2Inv benchmarks (7.2 vs. 4.7). Fig.~\ref{fig:histogram-svcomp} sheds more light on the behavior of \systable.~
We observe that 12 of the 26 solved instances require at least 5 proposals in total. 

We found that LLM oracles can produce surprisingly insightful loop invariants which are difficult for conventional formal methods to synthesize. While predicate-abstraction-based techniques typically generate predicates that involve only the operators and values in the program and follow a particular template,  LLMs are not constrained by these limitations. For example, for the program in Fig.~\ref{fig:running-example}, GPT-4 can consistently generate \texttt{x\%4==0} as the loop invariant although the modulo operator is not present in the program. In another example,  the LLM correctly understands the range of an \texttt{unsigned char} and suggests variable bounds as the assumption, which ends up being the key to proving the property.  This example is shown in \extref{App.~\ref{subsec:prompt-propose}}. There are also several cases where the LLM generates disjunctive invariants that precisely characterize the behavior of the loops.

\revise{Finally, we observe that \systable significantly outperforms \sysalt across all metrics. This suggests that the choice of oracle is also crucial for performance.} Additional experiments are presented in \extref{App.~\ref{app:more-exp})}. They include running the baseline solvers with a 12 hour timeout, using a configuration in which repair rules are not employed in Alg.~\ref{alg:sys}, and running \sys multiple times to account for the stochasticity of the oracles.


\section{Discussion of limitations and extensions}
\vspace{-2mm}
In this work, we propose a novel framework, \sys, which combines automated reasoning and LLMs. To the best of our knowledge,  \sys is the first framework to provide a theoretical foundation for such an integration via a formal calculus.
We also implemented \sys as a fully automated framework and demonstrated its efficiency on standard benchmark sets.
We conclude by discussing the current limitations of \sys,  which also point to future research directions.

As mentioned above, the practical performance of \sys depends on two types of external calls: the verifiers and the LLMs. Any improvements in these tools should translate into \sys improvements. Currently, modern verifiers are capable of handling only relatively small programs (see SV-COMP'23~\citep{SVCOMP}). Interestingly, even when provided with a strong invariant, they sometimes cannot solve the verification problem. One research direction that we envision is to customize \sys to a particular back-end verifier to obtain better performance and solve larger programs.

\revise{We also note that \sys primarily focuses on imperative languages. Extending it to functional languages is a direction for future research.}

While our experience with LLMs was largely positive (see Section~\ref{subsec:implementation} for a discussion of limitations that have already been at least partially addressed), 
there are more interesting challenges to tackle.
First, current LLMs have token limits, and many practical programs exceed those limits. 
Second, it is sometimes challenging for LLMs to generate complex logical formulas such as those with nested if-then-else expressions. 
We believe that to overcome this limitation, we need to (i) develop a prompting language for LLM invariant generation, and (ii) fine-tune LLMs for invariant generation tasks using this language.
\revise{Third, reasoning about programs with multiple loops remains challenging for LLMs. We believe fine-tuning could help address this challenge.} 
\revise{Fourth, we observed that the performance of \sys may vary depending on the LLM oracle. For example, our experiments demonstrate that GPT-4 is superior to GPT-3.5 turbo on tested benchmarks.
}
%
Finally, due to the limitations of LLMs and automated reasoners, our framework does not yet offer a significant boost for complex properties of real-world C libraries. 
%
%
However, a modular approach, where large parts of the program are abstracted and summarized in the form of pre- and post-conditions, can benefit from frameworks like \sys.

\newpage
\bibliographystyle{iclr2024_conference}
\bibliography{bibli}

\begin{thebibliography}{20}
\providecommand{\natexlab}[1]{#1}
\providecommand{\url}[1]{\texttt{#1}}
\expandafter\ifx\csname urlstyle\endcsname\relax
  \providecommand{\doi}[1]{doi: #1}\else
  \providecommand{\doi}{doi: \begingroup \urlstyle{rm}\Url}\fi

\bibitem[Beyer(2023)]{SVCOMP}
Dirk Beyer.
\newblock Competition on software verification and witness validation: Sv-comp 2023.
\newblock In Sriram Sankaranarayanan and Natasha Sharygina (eds.), \emph{Tools and Algorithms for the Construction and Analysis of Systems}, pp.\  495--522, Cham, 2023. Springer Nature Switzerland.
\newblock ISBN 978-3-031-30820-8.

\bibitem[Brown et~al.(2020)Brown, Mann, Ryder, Subbiah, Kaplan, Dhariwal, Neelakantan, Shyam, Sastry, Askell, Agarwal, Herbert-Voss, Krueger, Henighan, Child, Ramesh, Ziegler, Wu, Winter, Hesse, Chen, Sigler, Litwin, Gray, Chess, Clark, Berner, McCandlish, Radford, Sutskever, and Amodei]{gpt3}
Tom Brown, Benjamin Mann, Nick Ryder, Melanie Subbiah, Jared~D Kaplan, Prafulla Dhariwal, Arvind Neelakantan, Pranav Shyam, Girish Sastry, Amanda Askell, Sandhini Agarwal, Ariel Herbert-Voss, Gretchen Krueger, Tom Henighan, Rewon Child, Aditya Ramesh, Daniel Ziegler, Jeffrey Wu, Clemens Winter, Chris Hesse, Mark Chen, Eric Sigler, Mateusz Litwin, Scott Gray, Benjamin Chess, Jack Clark, Christopher Berner, Sam McCandlish, Alec Radford, Ilya Sutskever, and Dario Amodei.
\newblock Language models are few-shot learners.
\newblock In H.~Larochelle, M.~Ranzato, R.~Hadsell, M.F. Balcan, and H.~Lin (eds.), \emph{Advances in Neural Information Processing Systems}, volume~33, pp.\  1877--1901. Curran Associates, Inc., 2020.
\newblock URL \url{https://proceedings.neurips.cc/paper_files/paper/2020/file/1457c0d6bfcb4967418bfb8ac142f64a-Paper.pdf}.

\bibitem[Charalambous et~al.(2023)Charalambous, Tihanyi, Jain, Sun, Ferrag, and Cordeiro]{charalambous2023new}
Yiannis Charalambous, Norbert Tihanyi, Ridhi Jain, Youcheng Sun, Mohamed~Amine Ferrag, and Lucas~C. Cordeiro.
\newblock A new era in software security: Towards self-healing software via large language models and formal verification, 2023.

\bibitem[Chen et~al.(2021)Chen, Tworek, Jun, Yuan, de~Oliveira~Pinto, Kaplan, Edwards, Burda, Joseph, Brockman, Ray, Puri, Krueger, Petrov, Khlaaf, Sastry, Mishkin, Chan, Gray, Ryder, Pavlov, Power, Kaiser, Bavarian, Winter, Tillet, Such, Cummings, Plappert, Chantzis, Barnes, Herbert{-}Voss, Guss, Nichol, Paino, Tezak, Tang, Babuschkin, Balaji, Jain, Saunders, Hesse, Carr, Leike, Achiam, Misra, Morikawa, Radford, Knight, Brundage, Murati, Mayer, Welinder, McGrew, Amodei, McCandlish, Sutskever, and Zaremba]{copilot}
Mark Chen, Jerry Tworek, Heewoo Jun, Qiming Yuan, Henrique~Pond{\'{e}} de~Oliveira~Pinto, Jared Kaplan, Harrison Edwards, Yuri Burda, Nicholas Joseph, Greg Brockman, Alex Ray, Raul Puri, Gretchen Krueger, Michael Petrov, Heidy Khlaaf, Girish Sastry, Pamela Mishkin, Brooke Chan, Scott Gray, Nick Ryder, Mikhail Pavlov, Alethea Power, Lukasz Kaiser, Mohammad Bavarian, Clemens Winter, Philippe Tillet, Felipe~Petroski Such, Dave Cummings, Matthias Plappert, Fotios Chantzis, Elizabeth Barnes, Ariel Herbert{-}Voss, William~Hebgen Guss, Alex Nichol, Alex Paino, Nikolas Tezak, Jie Tang, Igor Babuschkin, Suchir Balaji, Shantanu Jain, William Saunders, Christopher Hesse, Andrew~N. Carr, Jan Leike, Joshua Achiam, Vedant Misra, Evan Morikawa, Alec Radford, Matthew Knight, Miles Brundage, Mira Murati, Katie Mayer, Peter Welinder, Bob McGrew, Dario Amodei, Sam McCandlish, Ilya Sutskever, and Wojciech Zaremba.
\newblock Evaluating large language models trained on code.
\newblock \emph{CoRR}, abs/2107.03374, 2021.
\newblock URL \url{https://arxiv.org/abs/2107.03374}.

\bibitem[Chowdhery et~al.(2022)Chowdhery, Narang, Devlin, Bosma, Mishra, Roberts, Barham, Chung, Sutton, Gehrmann, Schuh, Shi, Tsvyashchenko, Maynez, Rao, Barnes, Tay, Shazeer, Prabhakaran, Reif, Du, Hutchinson, Pope, Bradbury, Austin, Isard, Gur-Ari, Yin, Duke, Levskaya, Ghemawat, Dev, Michalewski, Garcia, Misra, Robinson, Fedus, Zhou, Ippolito, Luan, Lim, Zoph, Spiridonov, Sepassi, Dohan, Agrawal, Omernick, Dai, Pillai, Pellat, Lewkowycz, Moreira, Child, Polozov, Lee, Zhou, Wang, Saeta, Diaz, Firat, Catasta, Wei, Meier-Hellstern, Eck, Dean, Petrov, and Fiedel]{palm}
Aakanksha Chowdhery, Sharan Narang, Jacob Devlin, Maarten Bosma, Gaurav Mishra, Adam Roberts, Paul Barham, Hyung~Won Chung, Charles Sutton, Sebastian Gehrmann, Parker Schuh, Kensen Shi, Sasha Tsvyashchenko, Joshua Maynez, Abhishek Rao, Parker Barnes, Yi~Tay, Noam Shazeer, Vinodkumar Prabhakaran, Emily Reif, Nan Du, Ben Hutchinson, Reiner Pope, James Bradbury, Jacob Austin, Michael Isard, Guy Gur-Ari, Pengcheng Yin, Toju Duke, Anselm Levskaya, Sanjay Ghemawat, Sunipa Dev, Henryk Michalewski, Xavier Garcia, Vedant Misra, Kevin Robinson, Liam Fedus, Denny Zhou, Daphne Ippolito, David Luan, Hyeontaek Lim, Barret Zoph, Alexander Spiridonov, Ryan Sepassi, David Dohan, Shivani Agrawal, Mark Omernick, Andrew~M. Dai, Thanumalayan~Sankaranarayana Pillai, Marie Pellat, Aitor Lewkowycz, Erica Moreira, Rewon Child, Oleksandr Polozov, Katherine Lee, Zongwei Zhou, Xuezhi Wang, Brennan Saeta, Mark Diaz, Orhan Firat, Michele Catasta, Jason Wei, Kathy Meier-Hellstern, Douglas Eck, Jeff Dean, Slav Petrov, and Noah Fiedel.
\newblock Palm: Scaling language modeling with pathways, 2022.

\bibitem[Clarke et~al.(2018)Clarke, Henzinger, Veith, Bloem, et~al.]{clarke2018handbook}
Edmund~M Clarke, Thomas~A Henzinger, Helmut Veith, Roderick Bloem, et~al.
\newblock \emph{Handbook of model checking}, volume~10.
\newblock Springer, 2018.

\bibitem[Devlin et~al.(2019)Devlin, Chang, Lee, and Toutanova]{bert}
Jacob Devlin, Ming{-}Wei Chang, Kenton Lee, and Kristina Toutanova.
\newblock {BERT:} pre-training of deep bidirectional transformers for language understanding.
\newblock In Jill Burstein, Christy Doran, and Thamar Solorio (eds.), \emph{Proceedings of the 2019 Conference of the North American Chapter of the Association for Computational Linguistics: Human Language Technologies, {NAACL-HLT} 2019, Minneapolis, MN, USA, June 2-7, 2019, Volume 1 (Long and Short Papers)}, pp.\  4171--4186. Association for Computational Linguistics, 2019.
\newblock \doi{10.18653/v1/n19-1423}.
\newblock URL \url{https://doi.org/10.18653/v1/n19-1423}.

\bibitem[First et~al.(2023)First, Rabe, Ringer, and Brun]{Baldur}
Emily First, Markus~N. Rabe, Talia Ringer, and Yuriy Brun.
\newblock Baldur: Whole-proof generation and repair with large language models.
\newblock \emph{CoRR}, abs/2303.04910, 2023.
\newblock \doi{10.48550/arXiv.2303.04910}.
\newblock URL \url{https://doi.org/10.48550/arXiv.2303.04910}.

\bibitem[Gadelha et~al.(2018)Gadelha, Monteiro, Morse, Cordeiro, Fischer, and Nicole]{esbmc}
Mikhail~R Gadelha, Felipe~R Monteiro, Jeremy Morse, Lucas~C Cordeiro, Bernd Fischer, and Denis~A Nicole.
\newblock Esbmc 5.0: an industrial-strength c model checker.
\newblock In \emph{Proceedings of the 33rd ACM/IEEE International Conference on Automated Software Engineering}, pp.\  888--891, 2018.

\bibitem[Garg et~al.(2016)Garg, Neider, Madhusudan, and Roth]{0001NMR16}
Pranav Garg, Daniel Neider, P.~Madhusudan, and Dan Roth.
\newblock Learning invariants using decision trees and implication counterexamples.
\newblock In Rastislav Bod{\'{\i}}k and Rupak Majumdar (eds.), \emph{Proceedings of the 43rd Annual {ACM} {SIGPLAN-SIGACT} Symposium on Principles of Programming Languages, {POPL} 2016, St. Petersburg, FL, USA, January 20 - 22, 2016}, pp.\  499--512. {ACM}, 2016.
\newblock \doi{10.1145/2837614.2837664}.
\newblock URL \url{https://doi.org/10.1145/2837614.2837664}.

\bibitem[GitHub(2021)]{githubcopilot}
Inc. GitHub.
\newblock {GitHub Copilot}.
\newblock \url{https://copilot.github.com/}, 2021.
\newblock Accessed: September 2023.

\bibitem[Heizmann et~al.(2013)Heizmann, Christ, Dietsch, Ermis, Hoenicke, Lindenmann, Nutz, Schilling, and Podelski]{uautomizer}
Matthias Heizmann, J{\"u}rgen Christ, Daniel Dietsch, Evren Ermis, Jochen Hoenicke, Markus Lindenmann, Alexander Nutz, Christian Schilling, and Andreas Podelski.
\newblock Ultimate automizer with smtinterpol: (competition contribution).
\newblock In \emph{International Conference on Tools and Algorithms for the Construction and Analysis of Systems}, pp.\  641--643. Springer, 2013.

\bibitem[Kroening \& Tautschnig(2014)Kroening and Tautschnig]{cbmc}
Daniel Kroening and Michael Tautschnig.
\newblock Cbmc--c bounded model checker: (competition contribution).
\newblock In \emph{Tools and Algorithms for the Construction and Analysis of Systems: 20th International Conference, TACAS 2014, Held as Part of the European Joint Conferences on Theory and Practice of Software, ETAPS 2014, Grenoble, France, April 5-13, 2014. Proceedings 20}, pp.\  389--391. Springer, 2014.

\bibitem[Lewkowycz et~al.(2022)Lewkowycz, Andreassen, Dohan, Dyer, Michalewski, Ramasesh, Slone, Anil, Schlag, Gutman{-}Solo, Wu, Neyshabur, Gur{-}Ari, and Misra]{minerva}
Aitor Lewkowycz, Anders Andreassen, David Dohan, Ethan Dyer, Henryk Michalewski, Vinay~V. Ramasesh, Ambrose Slone, Cem Anil, Imanol Schlag, Theo Gutman{-}Solo, Yuhuai Wu, Behnam Neyshabur, Guy Gur{-}Ari, and Vedant Misra.
\newblock Solving quantitative reasoning problems with language models.
\newblock In \emph{NeurIPS}, 2022.
\newblock URL \url{http://papers.nips.cc/paper\_files/paper/2022/hash/18abbeef8cfe9203fdf9053c9c4fe191-Abstract-Conference.html}.

\bibitem[OpenAI(2023)]{openai2023gpt4}
OpenAI.
\newblock Gpt-4 technical report, 2023.

\bibitem[Pei et~al.(2023)Pei, Bieber, Shi, Sutton, and Yin]{pmlr-v202-pei23a}
Kexin Pei, David Bieber, Kensen Shi, Charles Sutton, and Pengcheng Yin.
\newblock Can large language models reason about program invariants?
\newblock In Andreas Krause, Emma Brunskill, Kyunghyun Cho, Barbara Engelhardt, Sivan Sabato, and Jonathan Scarlett (eds.), \emph{Proceedings of the 40th International Conference on Machine Learning}, volume 202 of \emph{Proceedings of Machine Learning Research}, pp.\  27496--27520. PMLR, 23--29 Jul 2023.
\newblock URL \url{https://proceedings.mlr.press/v202/pei23a.html}.

\bibitem[Sharma et~al.(2013)Sharma, Gupta, Hariharan, Aiken, Liang, and Nori]{0001GHALN13}
Rahul Sharma, Saurabh Gupta, Bharath Hariharan, Alex Aiken, Percy Liang, and Aditya~V. Nori.
\newblock A data driven approach for algebraic loop invariants.
\newblock In Matthias Felleisen and Philippa Gardner (eds.), \emph{Programming Languages and Systems - 22nd European Symposium on Programming, {ESOP} 2013, Held as Part of the European Joint Conferences on Theory and Practice of Software, {ETAPS} 2013, Rome, Italy, March 16-24, 2013. Proceedings}, volume 7792 of \emph{Lecture Notes in Computer Science}, pp.\  574--592. Springer, 2013.
\newblock \doi{10.1007/978-3-642-37036-6\_31}.
\newblock URL \url{https://doi.org/10.1007/978-3-642-37036-6\_31}.

\bibitem[Si et~al.(2020)Si, Naik, Dai, Naik, and Song]{code2inv}
Xujie Si, Aaditya Naik, Hanjun Dai, Mayur Naik, and Le~Song.
\newblock Code2inv: {A} deep learning framework for program verification.
\newblock In Shuvendu~K. Lahiri and Chao Wang (eds.), \emph{Computer Aided Verification - 32nd International Conference, {CAV} 2020, Los Angeles, CA, USA, July 21-24, 2020, Proceedings, Part {II}}, volume 12225 of \emph{Lecture Notes in Computer Science}, pp.\  151--164. Springer, 2020.
\newblock \doi{10.1007/978-3-030-53291-8\_9}.
\newblock URL \url{https://doi.org/10.1007/978-3-030-53291-8\_9}.

\bibitem[Wei et~al.(2022)Wei, Wang, Schuurmans, Bosma, Xia, Chi, Le, Zhou, et~al.]{wei2022chain}
Jason Wei, Xuezhi Wang, Dale Schuurmans, Maarten Bosma, Fei Xia, Ed~Chi, Quoc~V Le, Denny Zhou, et~al.
\newblock Chain-of-thought prompting elicits reasoning in large language models.
\newblock \emph{Advances in Neural Information Processing Systems}, 35:\penalty0 24824--24837, 2022.

\bibitem[White et~al.(2019)White, Tufano, Martinez, Monperrus, and Poshyvanyk]{deeprepair}
Martin White, Michele Tufano, Matias Martinez, Martin Monperrus, and Denys Poshyvanyk.
\newblock Sorting and transforming program repair ingredients via deep learning code similarities.
\newblock In \emph{2019 {IEEE} 26th International Conference on Software Analysis, Evolution and Reengineering ({SANER})}. {IEEE}, feb 2019.
\newblock \doi{10.1109/saner.2019.8668043}.
\newblock URL \url{https://doi.org/10.1109%2Fsaner.2019.8668043}.

\end{thebibliography}

\ifextended
\newpage
\appendix



\section{Background}
\label{app:background}
 \paragraph{Automated reasoning tools.}
We  overview several popular techniques that are used by modern program verification solvers, like  \cbmc~\citep{cbmc}, \esbmc~\citep{esbmc}, and \uautomizer~\citep{uautomizer}.

The Bounded Model Checking (BMC) approach is an iterative technique that verifies the program for each unwind bound up to a maximum value, $m$, e.g.
it can perform  $m$ loop unwinding. It either finds a counterexample within $m$ steps or returns \unknown.  This approach usually employs SMT solvers to find counterexamples very efficiently  at each step. However,
for non-trivial systems unrolling can be expensive and memory-consuming.
Moreover, vanilla BMC can only check finite reachability, e.g. it 
cannot prove loop invariants, for example.

Another popular technique is the k-induction method, which allows BMC-style methods to prove properties like loop invariants. This approach is also iterative. First, we check whether a property holds for $k$ steps from a valid state. If it does not, we find a counterexample. Otherwise, we check an inductive hypothesis that if the property holds for $k$ steps from a state then the property holds for the $k+1$th step. If it does, the property holds for any number of steps. If not, k-induction either increases $k$ or returns \unknown.
As in the case of BMC, unrolling can be computationally expensive. Moreover, k-induction is not complete; there are properties that are not k-inductive for any $k$.

The last approach we consider is  abstract interpretation verification methods.
Such methods create abstract representations of program states and variables. These abstract representations are simplified versions of the actual program states, focusing on relevant aspects while ignoring irrelevant details. The choice of abstraction depends on the specific properties to verify. Moreover, if a property holds for an abstract program, then it holds for the original program. The reverse is not true. Hence, if a property does not hold for an abstract program, we need to refine the abstract representation to exclude the counterexample. The main challenge here is how to come up with a good abstraction and how to design a refinement procedure.

\paragraph{Large Language Models.}
Large Language Models belong to a class of artificial intelligence models used in natural language processing tasks. These models are designed to process and generate both human language and structured inputs, such as code in various programming languages. 
Examples of large language models include Generative Pre-trained Transformer models like GPT-3~\citep{gpt3} or GPT-4~\citep{openai2023gpt4},  Bidirectional Encoder Representations from Transformers, BERT~\citep{bert}, and others. 
LLMs are getting increasingly popular as an AI-assistance for code generation tasks, like PaLM~\citep{palm},  GitHub Copilot~\citep{copilot,githubcopilot}, etc.

LLMs are usually trained in two steps. The main phase is training, where these models are exposed to very large corpora of data, usually collected over the internet. The architecture of LLMs is based on transformers and has a very large number of parameters. Therefore, it can capture relations between different parts of the input text to  produce coherent outputs. For some applications, we need to perform fine-tuning to expose the model to application-specific inputs. During inference, when a user provides inputs and a prompt that contains instructions to an LLM, it generates the output with respect to these instructions. 

\section{Definitions}\label{app:sec:def}

\begin{replemma}{lemma:stable}
Consider a program \program, two properties \property, \propertyAlt on \program, and a program $\programAlt\is\asm(\program,\propertyAlt)$.
The property \property is an invariant on $\program$, if 1) \propertyAlt is stable for $\program$; 2) \propertyAlt implies \property with respect to $\program$; and 3) $\neg\propertyAlt$ implies \property with respect to $\program$. More formally:
$\stable(\program,\propertyAlt) \wedge (\imply{\propertyAlt}{\program}{\property}) \wedge (\imply{\neg\propertyAlt}{\program}{\property})\Rightarrow \invariant(\program, \property).$
\end{replemma}
\vspace{-0.6cm}
\begin{proof}
Let $\propertyAlt\is\tuple{\predicate,\linenumber}$.
By the definition of stability, for any execution of \program, either \predicate always evaluates to true at line \linenumber or $\neg\predicate$ always evaluates to true at line \linenumber. In either case, the property \property holds by the definition of implication. Therefore, \property holds for all executions of \program, i.e., $\invariant(\program,\property)$.
\end{proof}

\section{\sys: Integrating LLMs in Program Verification}

\subsection{Soundness of \sys}
\label{app:subsec:soundness}

\begin{lemma}
For any configuration \tuple{\program, \assumptions, \trail} created by a sequence of valid rule applications starting from an initial configuration \tuple{\program, \varnothing, [\property_0]}, \trail is not empty.
\label{lemma:non-empty}
\end{lemma}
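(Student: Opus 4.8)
The plan is to prove the lemma by induction on the number $n$ of rule applications in the derivation, maintaining the invariant that every non-terminal configuration $\tuple{\program, \assumptions, \trail}$ produced along the way has a non-empty trail $\trail$. This is a routine structural induction over the eight rules of Figure~\ref{fig:rules}, and the key is simply to inspect how each rule transforms the trail.

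For the base case ($n = 0$), the only configuration reached is the starting one, $\tuple{\program, \varnothing, [\property_0]}$, whose trail $[\property_0]$ contains exactly one element and is therefore non-empty.

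For the inductive step, I would assume the claim holds for the configuration $\tuple{\program, \assumptions, \trail}$ reached after $n$ applications, so that $\trail$ is non-empty, and then examine the effect of the $(n+1)$-th rule. The rules \rulesuccessone, \rulesuccesstwo, and \rulefail produce the terminal symbols \success or \fail, which are not configurations of the form $\tuple{\program, \assumptions, \trail}$ and hence impose no obligation. The remaining five rules each yield a configuration, and I would verify case by case that its trail is non-empty: \rulepropose leaves the trail unchanged, so it stays non-empty; \ruledecide replaces $\trail$ by $\trail \concat \propertyAlt$, which is strictly longer; \rulerepairone has precondition $\trail \is \trail' \concat \property$ and outputs the trail $\trail' \concat \property$, which still retains the element $\property$; and both \rulebacktrack and \rulerepairtwo have precondition $\trail \is \trail' \concat \property \concat \propertyAlt$ and output $\trail' \concat \property$, which still contains $\property$.

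The only point requiring any care is the pair of rules that shorten the trail, namely \rulebacktrack and \rulerepairtwo. Here the argument relies on the fact that their preconditions demand a decomposition $\trail \is \trail' \concat \property \concat \propertyAlt$, i.e. that the trail already have at least two elements, so that after the final element is dropped at least the element $\property$ remains. I expect this to be entirely mechanical; there is no substantive obstacle, and the conclusion follows immediately once each rule has been checked.
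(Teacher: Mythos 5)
Your proof is correct and follows essentially the same route as the paper's: induction on the length of the derivation, with a case analysis showing that the only rules that shorten the trail require at least two elements in their precondition. In fact you are slightly more careful than the paper, which names only \rulebacktrack as a trail-shortening rule, whereas you correctly observe that \rulerepairtwo also drops the last element and is covered by the same argument.
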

\begin{proof}
This can be proven by induction on the length of the sequence. In the base case, \trail is $[\property_0]$. In the inductive case, the length of \trail does not reduce except in the \rulebacktrack rule which requires \trail to have at least 2 elements in the pre-condition. Thus, \trail is not empty.
\end{proof}

\begin{lemma}
Let \tuple{\program, \assumptions, \trail} be a configuration created by a sequence of valid rule applications starting from an initial configuration \tuple{\program, \varnothing, [\property_0]}, and let \property be the last element of \trail. We have \imply{\property}{\program}{\property_0}.
\label{lemma:implication}
\end{lemma}
\begin{proof}
We prove a stronger property, that for \emph{each} element \property in \trail, $\imply{\property}{\program}{\property_0}$. We induct on the length of the sequence. In the base case, \imply{\property_0}{\program}{\property_0} by proposition~\ref{prop:self-imply}. In the inductive case, we proceed by cases. \rulesuccessone, \rulesuccesstwo, \rulefail cannot be applied. In the post conditions of \rulepropose, \rulebacktrack, \rulerepairone, and \rulerepairtwo, \trail either shrinks or remains the same. Therefore, the inductive hypothesis can be directly applied. If \ruledecide rule is to be applied. In the pre-condition, the trail is $\trail\concat\property$, the current assumption is \set{\propertyAlt} and 
$\imply{\propertyAlt}{\program}{\property}$. In the post condition, \trail becomes $\trail\concat\property\concat\propertyAlt$.
By the inductive hypothesis, $\imply{\property}{\program}{\property_0}$. Furthermore, by Proposition~\ref{prop:chain-imply}, $\imply{\propertyAlt}{\program}{\property_0}$.
\end{proof}

\begin{lemma}
Let \tuple{\program, \assumptions, \trail\concat\property\concat\property'} be a configuration created by a sequence of valid rule applications starting from an initial configuration \tuple{\program, \varnothing, [\property_0]}, we have \imply{\property'}{\program}{\property}.
\label{lemma:implication2}
\end{lemma}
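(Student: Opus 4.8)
The plan is to prove a strengthened statement by induction on the length of the rule-application sequence, mirroring the proof of Lemma~\ref{lemma:implication}. The strengthening asserts that \emph{every} adjacent pair in the trail satisfies the implication: for any configuration $\tuple{\program, \assumptions, \trail}$ reachable from $\tuple{\program, \varnothing, [\property_0]}$, if $\trail = [\property_1, \ldots, \property_m]$ then $\imply{\property_{i+1}}{\program}{\property_i}$ for every $1 \le i < m$. The stated lemma follows immediately by applying this to the final adjacent pair $(\property, \property')$ of the trail $\trail'\concat\property\concat\property'$.

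For the base case the trail is $[\property_0]$, which has no adjacent pairs, so the claim holds vacuously. For the inductive step I would inspect the rule used in the last transition. The terminating rules \rulesuccessone, \rulesuccesstwo, and \rulefail produce \success or \fail rather than a configuration with a trail, so they need not be considered.

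The remaining rules fall into two routine categories and one substantive one. In \rulepropose and \rulerepairone the trail is left unchanged, so every adjacent-pair implication carries over verbatim from the inductive hypothesis. In \rulebacktrack and \rulerepairtwo the trail shrinks from $\trail'\concat\property\concat\propertyAlt$ to $\trail'\concat\property$; since the set of adjacent pairs of the new trail is a subset of those of the old trail, the inductive hypothesis again suffices. The only rule that appends a new last element is \ruledecide: its precondition gives $\assumptions \is \set{\propertyAlt}$, $\trail \is \trail'\concat\property$, and $\verifier(\program, \set{\propertyAlt}, \property) = \true$, while the postcondition trail is $\trail'\concat\property\concat\propertyAlt$. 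Every adjacent pair lying inside $\trail'\concat\property$ is handled by the inductive hypothesis, so the only genuinely new pair to verify is $(\property, \propertyAlt)$.

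To close the \ruledecide case I would invoke soundness of $\verifier$: since $\verifier(\program, \set{\propertyAlt}, \property) = \true$, the assumption set $\set{\propertyAlt}$ implies $\property$, which for a singleton assumption means $\property$ is an invariant on $\asm(\program, \propertyAlt)$ — precisely the definition of $\imply{\propertyAlt}{\program}{\property}$. This establishes the new pair and completes the induction. I expect the \ruledecide case to be the crux, and the only real subtlety is recognizing the need to strengthen the hypothesis to \emph{all} adjacent pairs rather than just the last one; this is exactly what makes the shrinking rules \rulebacktrack and \rulerepairtwo go through cleanly.
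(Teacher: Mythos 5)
Your proof is correct and follows the same route the paper intends: the paper's own proof of Lemma~\ref{lemma:implication2} is a one-line appeal to induction on the length of the rule sequence, and your write-up supplies exactly the details that induction requires — the strengthening to all adjacent pairs (needed so that \rulebacktrack and \rulerepairtwo go through), the observation that only \ruledecide creates a new pair, and the use of the verifier's soundness to discharge that pair. Nothing further is needed.
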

\begin{proof}
This can be proven by induction on the length of the sequence.
\end{proof}

\begin{reptheorem}{thm:soundness}
Given a property $\property_0$ and a program \program, if \success is reached by a sequence of valid rule applications starting from \tuple{\program, \varnothing, [\property_0]}, then $\property_0$ is an invariant on \program.
\end{reptheorem}
\begin{proof}
We can transition into \success by either the \rulesuccessone rule or the \rulesuccesstwo rule. In the pre-condition of \rulesuccessone, the trail is of the form $\trail\concat\property$, and the verifier \verifier proves that \invariant(\program, \property). By Lemma~\ref{lemma:implication}, \imply{\property}{\program}{\property_0}. Further by Proposition~\ref{prop:imply}, we have $\invariant(\program, \property_0)$. 
On the other hand,  in the pre-condition of \rulesuccesstwo, the trail is of the form $\trail\concat\property\concat\property'$. By Lemma~\ref{lemma:implication2}, $\imply{\property'}{\program}{\property}$. In addition, $\property'$ is stable and $\imply{\neg\property'}{\program}{\property}$. Therefore, by Lemma~\ref{lemma:stable}, \property is an invariant of \program. Since we also know from Lemma~\ref{lemma:implication} that $\imply{\property}{\program}{\property_0}$, it follows from Proposition~\ref{prop:imply} that $\property_0$ is an invariant of \program.
\end{proof}

\begin{reptheorem}{thm:soundness2}
Given a property $\property_0$ and a program \program, if \fail is reached by a sequence of valid rule applications starting from \tuple{\program, \varnothing, [\property_0]}, then $\property_0$ is not an invariant on \program.
\end{reptheorem}
\begin{proof}
We transition into the \fail state only when the verifier $\verifier(\program, \assumptions, \property_0) = \false$. Even if \assumptions is not empty, $\property_0$ is still not an invariant by Prop.~\ref{prop:imply}. 
\end{proof}

\begin{example} \label{app:subsec:soundness:full example}
To provide more intuition about the proof system and to motivate the design choices when instantiating \sys, we consider again our running example. Figure~\ref{fig:running-example} illustrates how \sys can be used to verify properties in practice. In Figure~\ref{fig:running-example} each frame represents a state of the program. Transitions between states are depicted by arrows, with each arrow marked with the rule applied to execute this transition. 
In this example, our goal is to prove the property \texttt{x!\!=30} in a while loop that keeps adding 4 to an unsigned 32-bit integer variable \texttt{x}. We note that this particular verification task is adapted from a similar one in the SV-COMP competition.\footnote{\url{https://sv-comp.sosy-lab.org/2023/results/results-verified/META_ReachSafety.table.html\#/table?filter=id_any(value(jain_5-2))}}
While seemingly trivial, during the competition, 19 out of the 24 participating tools (including the overall winner of the competition \uautomizer) were not able to solve this benchmark. 

Given such a verification problem, we create an initial configuration \tuple{\program, \varnothing, [\property])} where \program is the given problem and $\property\is\tuple{\texttt{x!\!=30}, 3}$.\footnote{3 is the line number (in the snippet) where the predicate is asserted.} 
Suppose the verifier \verifier is unable to solve this problem and returns \unknown. In this case, we need to generate a new proof goal, so the only rule we could apply is \rulepropose. To do so, we invoke the LLM-based oracle \prompterPropose to obtain a set of new properties that are potentially themselves invariants and might help prove the property. An example prompt is given on the left bottom part. This is not the exact prompt that we found the most effective in practice and we defer the discussion of \emph{prompting strategies} to Sec.~\ref{sec:strategy}. Suppose the oracle returns two potential predicates at the beginning of the while loop: \texttt{x\%2==0} and \texttt{x\%4==1} at line 3. The \rulepropose allows us to make one of them the current assumption. 

\emph{Case (\texttt{x\%2==0}):} The top row illustrates what happens when we transition into $\tuple{\program, \set{\propertyAlt\is\tuple{\texttt{x\%2==0}, 2}}, [\property]}$. While \propertyAlt is indeed an invariant, it does not help to prove the assertion and \verifier would return \unknown. This satisfies the condition to apply the \rulerepairone rule, which would invoke the oracle \prompterRepair to strengthen \propertyAlt. Suppose in this case, the oracle suggests the predicate $\propertyAlt'\is\texttt{x\%4==0}$, which clearly implies the original property \texttt{x!\!=30}. Suppose then $\verifier(\program, \set{\propertyAlt'}, \property)$ returns \true. We could apply the \ruledecide rule and transition to \tuple{\program, \varnothing, [\property, \propertyAlt']}, making \propertyAlt' the current proof goal. Proving \propertyAlt' is arguably easier because \texttt{x\%4==0} is \emph{inductive} (i.e., if it holds in one iteration and then it will hold in the next iteration), making conventional automated reasoning techniques such as k-induction applicable.
Suppose $\verifier(\program, \varnothing, \propertyAlt') = \true$, we could apply \rulesuccessone and transition into the \success state, thus completing the proof.

\emph{Case (\texttt{x\%4==1}):} 
The bottom row illustrates a different chain of rule applications when we picked the property $r\is\tuple{\texttt{x\%4==1}, 2}$ from the first proposal. While $r$ does not hold, it does imply \texttt{x!\!=30}. Suppose this implication is proven by the verifier. We could apply \ruledecide and transition to \tuple{\program, \varnothing, [\property, r]}. Since $r$ is not an invariant, $\verifier(\program, \varnothing, r)$ would be either \unknown or \false. Either way, we could apply \rulebacktrack and try a new assumption proposed by \prompterPropose. In practice, we could either invoke the stochastic \prompterPropose again or pick an un-attempted property (e.g., \tuple{\texttt{x\%2==0}, 2} proposed previously). In the illustration, we invoke \prompterPropose again and obtain the ``correct'' predicate \texttt{x\%4==0}, which would allow us to prove the property in two more steps.
$\blacksquare$
\end{example}

\section{Instantiating the \sys calculus}
\label{app:sec:strategy}

Here, we  describe Alg.~\ref{alg:sys}. First, the algorithm checks whether the current \property can be verified by \verifier or if a counterexample exists (line~\ref{algo:init:check}). If so, it returns either \success or \fail to the upper level of recursion or terminates if $\sysFunc$ is at the top level. If \verifier cannot prove $\property$, i.e. it returns \unknown, $\sysFunc$ enters a new phase of subgoal generation, where LLM oracles are used to propose new or repair existing properties (lines~\ref{algo:gen:start}--\ref{algo:gen:end}). In this phase, we start by calling  \prompterPropose to generate a new subgoal (line~\ref{algo:gen:start}). The \texttt{while} loop at line~\ref{algo:while:start} ensures that at most $k$ attempts can be unitized to generate a new subgoal for $p$. 
In line~\ref{algo:imply:check}, we call \verifier to check whether $\propertyAlt$ implies $\property$. If \verifier returns \false, we know that $\property$ is not an invariant and return \fail (line~\ref{algo:imply:check:fail}). If \verifier returns \unknown, then we need to repair $\propertyAlt$; for example, we might strengthen $\propertyAlt$ and try again to prove implication.
Otherwise, if $\propertyAlt$ \emph{does} imply $\property$, we recurse to prove $\propertyAlt$ (line~\ref{algo:recurse}). The last logical block of $\sysFunc$ in lines~\ref{algo:recurse:output:start}--\ref{algo:recurse:output:end} addresses the output of the recursive call. If we have successfully proved that $\propertyAlt$ is an invariant, we return \success. Otherwise, if $\propertyAlt$ is stable (see Definition~\ref{def:stable}), we can check whether $\neg\propertyAlt$ implies $\property$ (line~\ref{algo:recurse:output:flip}). If so, by Lemma~\ref{lemma:stable}, we can conclude that $\property$ is an invariant and also return \success. If we prove that $\propertyAlt$ is \false, we can repair $\propertyAlt$ by informing an LLM oracle that the property does not hold (line~\ref{algo:recurse:output:repair}).
%
Finally, if $f$ is \unknown  then we continue to the next iteration of the \texttt{while}  loop and consider the next proposed sub-goal.

Second, we present a proof of Theorem~\ref{thm:termination}.
\begin{reptheorem}{thm:termination}
Given a program \program, and a property \property on the program, Alg.~\ref{alg:sys} terminates with either \success, \fail, or \unknown.
\end{reptheorem}
\begin{proof}
Suppose $\property\is\tuple{\predicate,\linenumber}$. We prove with a decreasing argument on $\linenumber$. When $\linenumber = 0$, the algorithm terminates without entering the while loop, because \prompterPropose satisfies ~\ref{cond:lowerline} and $\prompterPropose(\program,\property) = \varnothing$.
In the recursive case, the while loop is executed for at most \numproposals iterations. In each iteration, we show that for the second input to $\sysFunc$ (Line 16), $\propertyAlt\is\tuple{\predicateAlt, \linenumber'}$, we have $\linenumber' < \linenumber$. This is true because \propertyAlt is generated either by $\prompterPropose(\program, \property)$ or $\prompterRepair(\program, \property, \ldots)$, both satisfying ~\ref{cond:lowerline}.
\end{proof}

\section{Related work}
\begin{table}[h!]
\centering
\scriptsize
 \begin{tabular}{|c||cccc|} 
 \hline
 & \revise{ML model} &\revise{Learning scheme} & \revise{Restrictions on}&  \revise{Framework purpose}  \\
 \hline\hline
\cite{0001NMR16} & \revise{Decision trees} & \revise{Dynamic learning via examples} &   \revise{invariant  structure} & \revise{invariant generation}  \\
 Code2Inv\citep{code2inv} & \revise{NNs}  & \revise{RL-based training}  &  \revise{program structure}  & \revise{verification (single invariant)}\\
 \cite{pmlr-v202-pei23a} & \revise{LLMs} & \revise{LLMs' fine-tuning} &  \revise{18 invariant types} & \revise{invariant generation}  \\  
 \sys (this work) & \revise{LLMs}  & \revise{--}  & \revise{--} & \revise{verification (a chain of invariants)}\\ 
 \hline
 \end{tabular}
 \caption {\revise{A comparison of representative learning-based invariant generation techniques.}}\label{table:rel}
\end{table}



There has been a lot of interest in using LLMs to augment formal reasoning.
\cite{charalambous2023new} proposed a novel framework, ESBMC-AI, that integrated LLMs reasoning and formal verification. They also applied their framework to the analysis of C programs focusing on  memory safety properties.
The main idea is to use LLMs as a code repair generator that can fix faulty code using a carefully designed prompt, a program, and a counterexample provided by a bounded model checker. However, ESBMC-AI assumes that program rewriting done by an LLM is valid, i.e.  syntactically and semantically correct. The latter is challenging to prove in an automatic manner as it requires program equivalence checking. Our framework does not use LLMs to modify code and treat the outputs of the LLM as suggestions until we prove that they are correct. Another example of an automated framework is Baldur~\citep{Baldur}, which uses an LLM, Minerva \citep{minerva}, to generate theorem proofs that are checked by Isabelle theorem prover. They also proposed a proof repair procedure.
In contrast, our interactive decision procedure relies on an automated reasoner to generate proofs and only uses LLMs generated program properties. 

\revise{There are a number of learning approaches to automatically generate invariants~\citep{code2inv,0001NMR16,0001GHALN13}. \cite{0001GHALN13} proposed a data-driven iterative approach to derive algebraic equation invariants from data generated through concrete executions of the program. 
\cite{0001NMR16} proposed a decision tree-based approach to learn invariants from examples; however, the space of possible invariants is limited to a logical combination of binary linear constraints.} The most related work to our framework is Code2Inv \citep{code2inv}, which proposed learning program invariants using machine learning techniques and employed automatic reasoning to verify the programs. The main principle of partitioning responsibilities between automated reasoners and LLMs is similar to our framework. However, we provide a formalization for such interactive procedures with formal calculus and a strategy to use it in practice. Our procedure is more general as it allows the generation of sequences of logically related properties, and we demonstrate that it is more efficient in practice. Finally, recent work by \cite{pmlr-v202-pei23a} investigates the potential of invariant generation for Java programming language. While this framework does not incorporate automated reasoning components, it shows the potential of LLMs to uncover program properties. 
\revise{Table~\ref{table:rel}  presents a summary of learning-based approaches.}

\section{Prompting the GPT}\label{sec:prompts}

In this section, we describe how we automatically constructed the prompts in \prompterPropose and \prompterRepair, and show examples of the prompts and the GPT outputs. We provided in the supplementary materials the execution traces of \sys on solved benchmarks used in our experiments. 

\subsection{Proposing new properties}\label{subsec:prompt-propose}

Given a program $\program$ and a property represented as a C assert statement in $\program$, we inserted the placeholder lines ``\texttt{// Line A}'', ``\texttt{// Line B}''... to dedicated program lines as described in ~\ref{subsec:implementation}. Our prompt has the following structure:

\begin{verbatim}
---------
[P1]
Print [P2] as valid C assertions at line[P3] [P4] that 
help prove the assertion. Use '&&' or '||' if necessary. 
Don't explain. Your answer should be 'assert(...); // line [P5]'
---------
\end{verbatim}

\texttt{P1} is the program with the placeholder lines and the C assertion; \texttt{P2} is either \texttt{facts} or \texttt{loop invariants} depending on whether all the placeholder lines are in loops; \texttt{P3} is either empty or \texttt{s} depending on the number of placeholder lines; \texttt{P4} are the placeholder line names (e.g., \texttt{A, B}); and \texttt{P5} is either \texttt{A}, if there is only one placeholder line, or \texttt{name}, if there are multiple.

Below is an example extracted from the execution of our prototype on an SV-COMP benchmark:
\begin{verbatim}
int main() {
unsigned char n = (unsigned char) rand();
if (n == 0) {
return 0;
}
unsigned char v = 0;
unsigned int s = 0;
unsigned int i = 0;
while (i < n) {
// Line A
v = (unsigned char) rand();
s += v;
++i;
}
assert(s >= v);
return 1;
}
Print loop invariants as valid C assertions at line A that
help prove the assertion. Use '&&' or '||' if necessary. 
Don't explain. Your answer should be 'assert(...); // line A'
\end{verbatim}

Below are GPT-4's responses on multiple attempts as well as the final set of extracted properties our prototype proceeds to prove.
\begin{verbatim}
GPT output 1 with penalty 1.5:
assert(i <= n); // line A
assert(s >= i); // line A
GPT output 2 with penalty 1.5:
assert(i <= n); // line A
assert(s >= 0 && v >= 0); // line A
GPT output 3 with penalty 1.5:
assert(i <= n); // line A
assert(s >= 0 && s <= i*UCHAR_MAX); // line A
GPT output 4 with penalty 1.5:
assert(i <= n); // line A
assert(s >= 0 && s <= i*UCHAR_MAX); // line A
GPT output 1 with penalty 2:
assert(i <= n); // line A
assert(s >= 0 && s <= UCHAR_MAX * i); // line A
GPT output 2 with penalty 2:
assert(i <= n); // line A
assert(s >= i*v && s <= (i*255)); // line A
GPT output 3 with penalty 2:
assert(i <= n); // line A
assert(s >= 0 && s <= i*UCHAR_MAX); // line A
GPT output 4 with penalty 2:
assert(i <= n); // line A
assert(s >= 0 && s <= UCHAR_MAX * i); // line A

Found 6 potential sub-goals
Goal 1: i <= n after line 8
Goal 2: s >= 0 && s <= i*0xff after line 8
Goal 3: s >= 0 && s <= 0xff * i after line 8
Goal 4: s >= i after line 8
Goal 5: s >= 0 && v >= 0 after line 8
Goal 6: s >= i*v && s <= (i*255) after line 8
\end{verbatim}

\subsection{Repairing a previous proposal}\label{subsec:prompt-repair}

The oracle \prompterRepair uses a similar prompting template:
\begin{verbatim}
---------
[P1]
Print [P2] as valid C assertions at line A that 
help prove the assertion. Your previous answer [P3]
is [P4]. Use '&&' or '||' if necessary. Don't explain. 
Your answer should be 'assert(...);'
---------
\end{verbatim}

\texttt{P1} is the program with the placeholder line (corresponding to the assumption $\property \is \tuple{\predicate, \linenumber}$ to repair) and the C assertion; \texttt{P2} is either \texttt{facts} or \texttt{loop invariants} depending on whether \property is in a loop; \texttt{P3} is \property; \texttt{P4} is either \texttt{incorrect} or \texttt{too weak}.

Below is an example extracted from the execution of our prototype on another SV-COMP benchmark:
\begin{verbatim}
int main() {
unsigned int n = (unsigned int) rand();
unsigned int x = n, y = 0, z;
while (x > 0) {
x--;
y++;
}
z = y;
// Line A
while (z > 0) {
x++;
z--;
}
assert(y + z == n);
return 1;
}
Print facts as valid C assertions at line A that help 
prove the assertion. Your previous answer 'x + z == n' 
is too weak. Use '&&' or '||' if necessary. Don't explain. 
Your answer should simply be 'assert(...);'
\end{verbatim}

The GPT outputs on different prompting attempts with different penalties and the extracted properties:
\begin{verbatim}
GPT output 1 with penalty 1.5:
assert(x + y == n); // Line A
assert(x + z == y); // Line B
GPT output 2 with penalty 1.5:
assert(x + y == n); // Line A
assert(x + z == n); // Line B
GPT output 1 with penalty 2:
assert(x + y == n); // Line A
assert(x + z == n); // Line B
GPT output 2 with penalty 2:
assert(x + y == n); // Line A
assert(z + x == n); // Line B

Found 4 potential adapted sub-goals
Goal 1: x + y == n after line 7
Goal 2: x + z == n after line 7
Goal 3: x + z == y after line 7
Goal 4: z + x == n after line 7
\end{verbatim}

\section{Additional experimental results.}
\label{app:more-exp}

In this section, we report additional results on the SV-COMP benchmarks. First, we run the baseline solvers with a 12-hour timeout instead of just 15 minutes; second, we consider an additional \sys configuration, \sysnorefine, which is the same as \systable except that we do not apply the repair rules (\rulerepairone, \rulerepairtwo); finally, we run each \sys configurations three times (A, B, and C), to account for the stochasticity of the oracles. Results are shown in Tab.~\ref{tab:additional}.

We observe that 1) running the baseline solvers with a longer runtime does not result in significantly more solutions; 2) the repair rules contribute to the solving of more instances; 3) while an un-negligible variance exist across different runs of the same \sys configuration, the conclusions drawn in Sec.~\ref{sec:exp}, that \sys can boost the performance of traditional program verifiers, and that the choice of the underlying oracle can significantly affect performance, remain valid.

\begin{table}[h]
\setlength\tabcolsep{4pt}
\centering
\begin{tabular}{cccccc}
\toprule
Configurations & Time limit &  Run  &Solved  & Time (\revise{s}) & \# proposals \\
\midrule
\uautomizer & 15 mins & - & 1 & 824.3 & 0 \\
\esbmc & 15 mins & - & 1 & 675.7 & 0 \\
\uautomizer & 12 hrs & - & 2 & 1304.0 & 0 \\
\esbmc & 12 hrs & - & 4 & 1637.7 & 0 \\
\midrule
\revise{\sysalt} & 15 mins & A & \revise{14} & \revise{162.2} & \revise{8.5} \\
\revise{\sysalt} & 15 mins & B & \revise{15} & \revise{179.3} & \revise{5.4} \\
\revise{\sysalt} & 15 mins & C & \revise{15} & \revise{103.3} & \revise{4.8} \\
\midrule
\sysnorefine & 15 mins & A & 19 & 139.0 & 4.3 \\
\sysnorefine & 15 mins & B & 20 & 146.1 & 4.8 \\
\sysnorefine & 15 mins & C & 20 & 145.1 & 5.4 \\
\midrule
\systable & 15 mins & A & 25 & 234.5 & 7.2 \\
\systable & 15 mins & B & 24 & 170.2 & 6.1 \\
\systable & 15 mins & C & 24 & 201.9 & 6.3 \\
\bottomrule
\end{tabular}
\caption{
    All experiments conducted on the  47 SV-COMP benchmarks.
    }
\label{tab:additional}
\end{table}
\fi

\end{document}